\newcommand{\eps}{\ensuremath{\varepsilon}}
\newcommand{\Ind}{\mathbbm{1}}
\DeclareMathOperator*{\E}{\mathbb{E}}
\declaretheorem[]{theorem}
\declaretheorem[sibling=theorem]{definition}
\declaretheorem[sibling=theorem]{lemma}
\declaretheorem[sibling=theorem]{claim}
\newtheorem*{theorem*}{Open question}
\newenvironment{prevproof}[2]{\noindent {\bf {Proof of {#1}~\ref{#2}.}}}{$\hfill\qed$\vskip \belowdisplayskip}
\begin{document}
	\title{
	Fast Modular Subset Sum using Linear Sketching 
}

\author{
  Kyriakos Axiotis\\
  MIT\\
  \texttt{kaxiotis@mit.edu} 
  \and
  Arturs Backurs\\
  MIT\\
  \texttt{backurs@mit.edu} 
  \and
  Christos Tzamos\\
  University of Wisconsin-Madison\\
  \texttt{tzamos@wisc.edu}
}
\date{}

\maketitle

\begin{abstract}
Given $n$ positive integers, the \emph{Modular Subset Sum} problem asks if a subset adds up to a given target $t$ modulo a given integer $m$.
This is a natural generalization of the Subset Sum problem (where $m=+\infty$) with ties to additive combinatorics and cryptography.

Recently, in \cite{Bringmann17, KX17}, efficient algorithms have been developed for the non-modular case, running in near-linear pseudo-polynomial time.
For the modular case, however, the best known algorithm by Koiliaris and Xu \cite{KX17} runs in time $\widetilde{O}\left(m^{5/4}\right)$.

In this paper, we present an algorithm running in time $\widetilde{O}(m)$, which matches a recent conditional lower bound 
of \cite{ABHS17}
based on the Strong Exponential Time Hypothesis.
Interestingly, in contrast to most previous results on Subset Sum, our algorithm does not use the Fast Fourier Transform. Instead, 
it is able to simulate the ``textbook'' Dynamic Programming algorithm much faster, using ideas from linear sketching.
This is one of the first
applications of sketching-based techniques to obtain fast algorithms for combinatorial problems in an offline setting.


\end{abstract}

\section{Introduction}

In the \emph{Subset Sum} problem, one is given a multiset of integers and an integer target $t$ and is asked to decide
if there exists a subset of the integers that sums to the target $t$.
Subset Sum is a classic problem known to be NP-complete, originally included as one of Karp's 21 NP-complete problems \cite{Karp72}.
Despite its NP-completeness, it is possible to obtain algorithms that are pseudo-polynomial in the target $t$.
In particular, the ``textbook'' Dynamic Programming algorithm of Bellman \cite{Bellman57} solves the problem in $O(nt)$ time.

Due to its importance and applications in various areas, there have been a lot of works 
improving the runtime \cite{Pisinger99, Pferschy99, Pisinger03, KX17, Bringmann17}, 
obtaining polynomial space \cite{LN10, Bringmann17}, or 
achieving polynomial decision tree complexity \cite{Meyer84, CIO15, ES16, KLM18}.
In addition to Subset Sum, there has recently been a lot of effort in obtaining faster algorithms for the more general problem
of \emph{Knapsack} \cite{EW18, BHSS18, AT18}.

The most recent result by Bringmann \cite{Bringmann17} brings down the runtime for Subset Sum to $\widetilde{O}(t)$,
which is known to be optimal assuming the Strong Exponential Time Hypothesis \cite{ABHS17}. 
The fastest known deterministic algorithm by Koiliaris and Xu has a runtime of $\widetilde{O}(\sqrt{n} t)$ \cite{KX17}.

An important generalization of Subset Sum
is the \emph{Modular Subset Sum} problem, in which sums are taken over the finite cyclic group $\mathbb{Z}_m$ for some given integer $m$.
This problem and its structural properties 
has been studied extensively in Additive Combinatorics \cite{EGZ61, Olson68, Szemeredi70, Olson75, Alon87, HLS07, Vu08}.
The trivial algorithm for deciding whether a given target is achievable modulo $m$ runs in time $O(nm)$.
Interestingly, even the fastest algorithm for (non-Modular) Subset Sum of \cite{Bringmann17} does not give any nontrivial
improvement over this runtime. 
Koiliaris and Xu \cite{KX17} were able to obtain an algorithm running in time $\widetilde{O}(m^{5/4})$ by exploiting
structural properties of the Modular Subset Sum problem implied by Additive Combinatorics \cite{HLS07}.

The main contribution of our work is an optimal algorithm for the Modular Subset Sum problem.

\subsection{Our Contributions}

In this paper, we present an algorithm for the Modular Subset Sum problem 
running in time $\widetilde{O}(m)$\footnote{
	If $n > m$ our algorithm would need to spend $\Omega(n)$ time just to read the input. However, if the input is represented succinctly, our algorithm runs in $\widetilde{O}(m)$ time even if $n > m$. An $\widetilde{O}(m)$-size succinct representation for a multiset 
of a universe with $m$ elements is always possible by listing the elements and their multiplicities.
For this reason, we omit the dependence on $n$. See discussion in Section~\ref{discussion}.}. 

\begin{theorem}
There is an $\widetilde{O}(m)$-time algorithm that with high probability returns all subset sums that are attainable,
i.e. it solves the Modular Subset Sum problem for all targets $t$.
\end{theorem}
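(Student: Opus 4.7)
The plan is to faithfully simulate the textbook dynamic program for Subset Sum: maintain the set $S \subseteq \mathbb{Z}_m$ of all subset sums achievable so far, and upon reading each input $a_i$ perform the update $S \leftarrow S \cup ((S + a_i) \bmod m)$. Since $|S|$ is monotone and bounded by $m$, the total number of insertions across all iterations is at most $m$, so it suffices to produce the new elements $T_i := ((S + a_i) \bmod m) \setminus S$ in time $\widetilde{O}(|T_i|)$ (plus a polylogarithmic per-iteration overhead); summing telescopes to $\widetilde{O}(m)$.

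The workhorse is a linear sketch of the characteristic vector $\mathbf{1}_S \in \{0,1\}^m$ supporting three operations: (i) point insertions, (ii) given the sketch of $\mathbf{1}_S$ and a shift $a$, compute the sketch of the cyclically shifted vector $\mathbf{1}_{(S+a) \bmod m}$, and (iii) sparse recovery, i.e.\ given the sketch of an $O(k)$-sparse vector, return its support in $\widetilde{O}(k)$ time. Combining (i)--(iii), by linearity we form the sketch of $\mathbf{1}_{(S + a_i) \bmod m} - \mathbf{1}_S$; this vector equals $+1$ on $T_i$, $-1$ on $S \setminus ((S + a_i) \bmod m)$, and $0$ elsewhere, has support of size $2|T_i|$, and sparse recovery therefore returns $T_i$ in $\widetilde{O}(|T_i|)$ time.

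The sketch will be an array of $O(\log m)$ bucketed structures, one per sparsity scale $k = 2^0, 2^1, \ldots, 2^{\lceil \log m\rceil}$, so that no a priori knowledge of $|T_i|$ is required. At sparsity $k$, each of $O(k)$ buckets stores a count and a weighted sum of indices, whence any bucket containing a single nonzero entry can be inverted; a sketch which fails sparse recovery can be detected by re-hashing the recovered support and comparing to the maintained sketch, so repetitions drive the failure probability down. To enable (ii), we must use \emph{shift-compatible} hashing: concretely, hash functions of the form $h(x) = x \bmod B$ with $B \mid m$ (so a shift by $a$ induces the cyclic permutation $b \mapsto (b + a) \bmod B$ of buckets), composed with a random affine bijection $\pi(x) = \alpha x + \beta \bmod m$ (with $\gcd(\alpha, m) = 1$) that supplies the randomness needed to isolate each symmetric-difference element in some bucket at some level with high probability.

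The principal obstacle is modular wraparound: when $x + a_i \ge m$, the shift is $x + a_i - m$, which introduces a nonlinear $-m$ correction to the weighted sum of that bucket and makes the naive shift formula incorrect. I would circumvent this by splitting $S$ into $S \cap [0, m - a_i)$ and $S \cap [m - a_i, m)$, each of which shifts cleanly without wraparound, and storing the sketches in a Fenwick-tree-over-sketches fashion so that the sketch of $\mathbf{1}_S$ restricted to any contiguous interval of $\mathbb{Z}_m$ is retrievable in polylogarithmic time and shifted independently. Putting the pieces together, each iteration costs $\widetilde{O}(|T_i| + \mathrm{polylog}(m))$ and succeeds with high probability by the analysis of bucketed sparse recovery together with the randomization of $\pi$; summing over all inputs yields a total runtime of $\widetilde{O}(m)$, as claimed.
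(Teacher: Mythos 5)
Your high-level plan is the same as the paper's: simulate Bellman's recursion $S \leftarrow S \cup (S+w_i)$, charge all work to newly discovered sums (at most $m$ over the whole run), and find the new sums by sparse recovery from a shift-compatible linear sketch whose buckets store a count and an index-weighted sum, randomized by an affine bijection $x \mapsto \alpha x + \beta \bmod m$ with $\gcd(\alpha,m)=1$. That is exactly the paper's sketch. However, two of your concrete choices have genuine gaps.

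First, your bucketing $h(x) = x \bmod B$ with the requirement $B \mid m$ is vacuous for prime $m$ (and more generally $m$ need not have divisors at every sparsity scale $2^0,\dots,2^{\lceil \log m\rceil}$), so the construction as stated does not exist for most moduli. The paper sidesteps this by bucketing the \emph{permuted} domain into contiguous windows $[(j-1)\ell, j\ell-1]$ of an arbitrary length $\ell$: a shift by $w$ then corresponds to querying the translated interval $[l-\alpha w, r-\alpha w]$ in the same structure, which a balanced BST with subtree sums answers in $O(\log m)$ time for any $m$ (wraparound is just a split into two intervals, and the index correction is the linear map $\bigl(\begin{smallmatrix}1&0\\ w&1\end{smallmatrix}\bigr)$ with all arithmetic mod $m$). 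Your Fenwick-tree-over-intervals idea is close to this, but it is incompatible with residue-class buckets, since an interval of $\mathbb{Z}_m$ is not a union of residue classes.

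Second, you assert that a single random affine bijection isolates \emph{each} symmetric-difference element in some bucket ``with high probability.'' This is false: with $k$ nonzeros and windows of width $\approx m/k$, a fixed nonzero is isolated only with constant probability (the paper's Lemma~\ref{helper} gives $1/2$, and the bound degrades by $\log\log m$ factors because $\alpha$ ranges over $\mathbb{Z}_m^*$ rather than $\mathbb{Z}_m$). So one permutation only recovers a constant fraction of the support in expectation, and one must iterate: update $N^i_\pm$, re-form the residual vector, and repeat with $\widetilde{O}(\log (nm))$ independently chosen permutations, each backed by its own data structure. This iteration, together with a procedure that finds the correct window width by doubling and sampling (so that you never enumerate more than $O(\log\log m)$ windows per recovered element), is precisely the part of the argument your proposal leaves unaddressed, and it is where the paper does most of its work (Lemmas~\ref{validsketches}, \ref{lem:density}, and \ref{lem:density2}). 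Your per-bucket verification by re-hashing is replaceable by the paper's simpler point query, which checks each candidate index directly against the maintained sets.
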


Our algorithm works by simulating the 
``textbook'' Dynamic Programming algorithm of Bellman \cite{Bellman57} much faster, using ideas from linear sketching
to avoid recomputing target sums that are already known to be attainable.
We present a summary of these techniques in Section~\ref{techniques}.

An interesting feature of our algorithm is that, in contrast to most previous results on Subset Sum, 
it does not rely on the Fast Fourier Transform (FFT). 
In particular, by setting $m = s$ where $s$ is the sum of all input numbers,
our algorithm implies an algorithm for the non-Modular Subset Sum problem.
It matches the $\widetilde{O}(s)$ runtime achieved by Koiliaris and Xu \cite{KX17}, but without using FFT.

Another important property of our algorithm is that it does not need to know all the input numbers
in advance. Instead, it works in an \emph{online} fashion, by outputting all the newly attainable subset sums
for every new number that is provided.

Finally, the runtime of our algorithm is optimal, in the sense that there is no $n^{O(1)} m^{1-\epsilon}$ algorithm for any $\epsilon > 0$, 
assuming the Strong Exponential Time Hypothesis or other natural assumptions.
This is implied by 
recent results in Fine-Grained Complexity \cite{ALW14, CDLMNOPSW16, ABHS17} 
for the Subset Sum problem. Note that our algorithm matches this conditional lower bound for a single target $t$, while 
also outputting \emph{all} attainable subset sums.

We expect that our techniques will be applicable to other settings. In particular, an interesting open problem is the following:

\begin{theorem*}
Is there an $\widetilde{O}(n + M)$ time algorithm for the non-Modular Subset Sum problem, where
$M$ is the largest of the $n$ given integers?
\end{theorem*}
Such a runtime would improve the best known algorithm for Subset Sum \cite{Bringmann17} without contradicting any known conditional lower bounds.
In Section~\ref{discussion} we discuss how our techniques based on linear sketching could be helpful in resolving this question.

\subsection{Overview and techniques}
\label{techniques}

\paragraph{Certificate Complexity}
To illustrate our ideas, it is helpful to first consider the certificate complexity of the Modular Subset Sum problem.
It is easy to provide a certificate that target $t$ is attainable by just providing a list of elements that sum up to $t$.
But how can we certify that there is \emph{no} such subset?

An idea is to efficiently certify correctness of every step of Bellman's algorithm. 
Let $S^0 = \{0\}$ and $S^i$ be the set of attainable sums using the first $i$ integers. 
Bellman's algorithm computes $S^{i}$ as $S^{i-1} \cup (S^{i-1} + w_{i})$, where $w_i$ is the $i$-th integer.
The $O(nm)$ running time of Bellman's algorithm stems from the fact that every step costs $O(m)$ time, and there are $n$ steps.

To certify it more efficiently, the runtime of our algorithm shouldn't depend on the whole $S^i$, but rather spend time proportional to 
$\left|S^i \setminus S^{i-1}\right|$, i.e. the number of newly created sums. The certificate provides a set which is supposed
to be the set of newly added elements. While it is easy to certify that all elements from the provided set are indeed attainable,
the harder part is to certify that no elements are missing from the provided set. To do that, we perform 
Polynomial Identity Testing via inner products with random vectors to check if the characteristic vectors of two sets are the same.
To implement this efficiently, we show that it suffices to
use pseudorandom vectors obtained by linear hash functions. Linear hash functions allow one to compute very efficiently the
hash value of sets under shifts. This is important as operations of the form $S^{i-1} + w_{i}$ appear throughout the execution of Bellman's algorithm. We defer further details to Section~\ref{certificate}.

The ideas above suffice to obtain a non-deterministic algorithm running in time $\widetilde{O}(m)$ that guesses the newly created elements $S^i \setminus S^{i-1}$ at every step $i$ and certifies whether these guesses are correct. 
Removing the non-determinism and obtaining an actual algorithm is more challenging.
For many problems such as matrix multiplication
\cite{Freivalds77}, Orthogonal Vectors \cite{Williams16},
3-SUM, Linear Programming, and All-Pairs-Shortest-Paths \cite{CGIMPS16} 
the runtimes of the best algorithms are significantly worse than those of their non-deterministic counterparts. 
More generally, polynomial-sized certificates don't necessarily imply polynomial-time algorithms, 
as this is equivalent to the question ${\sf P} \overset{?}{=} {\sf NP \cap coNP}$.

\paragraph{Sketching}  

For the problem of Modular Subset Sum, however, we show that
using ideas from linear sketching it is possible to remove the 
non-determinism by incurring only a poly-logarithmic overhead in the runtime.

Specifically, besides just checking whether the characteristic vectors of two sets are the same,
we can use poly-logarithmic size linear sketches of the vectors to identify a \emph{position} in which they differ.
This works by carefully isolating elements by randomly subsampling subsets of entries of different sizes.
Naively computing all positions in which two sets differ would require computing new sketches many times with fresh randomness.
A contribution of our work is showing that only limited randomness is sufficient, which allows us to maintain only a few
data structures for evaluating the sketches.

This technique allows us to recover all elements from the symmetric difference between $S^{i-1}$ and $S^{i-1} + w_{i}$, in 
poly-logarithmic amortized time per element.
Observing that half of the elements in this symmetric difference are the newly attainable sums, we discover all of them
by spending time which is near-linear in the number of newly attainable sums.
More details can be found in Section~\ref{algorithm}.

Linear sketching has originally been developed with applications to streaming algorithms and dimensionality reduction.
Recently, it has also emerged as a powerful tool for 
Linear algebra \cite{Woodruff14},
dynamic graph algorithms \cite{AGM12,KKM13},
and approximation algorithms \cite{ANOY14}. 
However, to the best of our knowledge, our algorithm is one of the first 
applications of linear sketching to obtain fast algorithms for combinatorial problems in an offline setting.

	\section{Preliminaries}

We first define the problems of Subset Sum and Modular Subset Sum formally.

\begin{definition}[Subset Sum]
Given integers $w_1, \dots, w_n$ and a target $t$, decide whether there exists an $S\subseteq[n]$ such that $\sum\limits_{i\in S} w_i = t$.
\end{definition}

\begin{definition}[Modular Subset Sum]
Given integers $w_1, \dots, w_n\in\mathbb{Z}_m$ and a target $t\in\mathbb{Z}_m$, decide whether there exists an $S\subseteq[n]$ such that 
$\sum\limits_{i\in S} w_i \equiv t\ (mod\ m)$.
\end{definition}

We will also need the following notation.

\begin{definition}
Given $A\subseteq \mathbb{Z}_m$ and $x\in\mathbb{Z}_m$, we denote by $A+x = \{y + x\ |\ y\in A\}$ the operation of \emph{shifting} set $A$ by $x$.
\end{definition}

\begin{definition}
We denote by $\mathbb{Z}_m^*$ the set of integers in $\mathbb{Z}_m$ that are coprime with $m$. In other words,
\[ \mathbb{Z}_m^* = \{ x\in\mathbb{Z}_m\ |\ \gcd(x,m) = 1 \} \]
\end{definition}

\begin{definition}
Given a random variable $x$ and a probability distribution $\mathcal{D}$, $x\sim \mathcal{D}$ denotes that $x$ is sampled from $\mathcal{D}$.
Given a random variable $x$ and a set of outcomes $D$, $x\sim_U D$ denotes that $x$ is sampled uniformly at random from $D$.
\end{definition}
\begin{definition}
Given $x\in \mathbb{Z}_m$, we denote by $|x| = \min\{x, m - x\}$ the \emph{absolute value} of $x$.
\end{definition}

\section{A Certificate for Bellman's algorithm}
\label{certificate}

To illustrate our ideas, we will provide an efficiently verifiable certificate for checking whether a given subset sum $t$ is attainable.
While certifying attainable subset sums is straightforward, certifying that no subset sums to $t$ is more challenging.
To achieve this, we provide a certificate that certifies the execution of Bellman's algorithm.
Even though Bellman's algorithm runs in $O(nm)$, we will show that it is possible to certify it in $\widetilde{O}(n + m)$ time.

Let $S^0 = \{0\}$ and $S^i$ be the set of attainable sums using the first $i$ integers. 
Bellman's algorithm computes $S^{i}$ as $S^{i-1} \cup (S^{i-1} + w_{i})$, where $w_i$ is the $i$-th integer.
To certify it more efficiently, the runtime of our algorithm shouldn't depend on the whole $S^i$, but rather spend time proportional to 
$\left|S^i \setminus S^{i-1}\right|$. To do this we certify all sets $S^i \setminus S^{i-1}$
of newly attainable subset sums after the $i$-th number is processed.

Given a collection of sets $N^i_{+}$ which are claimed to be $S^{i} \setminus S^{i-1}$, 
it is straightforward to
certify that $N^i_{+} \subseteq S^{i} \setminus S^{i-1}$ by checking for all $x\in N^i_{+}$ that $x - w_i \in S^{i-1}$ and $x\notin S^{i-1}$.
However, certifying that $N^{i}_{+}$ contains \emph{all} elements in $S^{i} \setminus S^{i-1}$ is significantly harder.
To perform this verification, suppose we are also given sets $N^i_{-}$, which are claimed to be equal to
$S^{i} \setminus (S^{i-1} + w_{i})$. 
Note that requiring knowledge of $N^i_{-}$ only doubles the valid certificate size, which directly follows from the claim below:
\begin{claim}
$|S^{i} \setminus S^{i-1}| = |S^{i} \setminus (S^{i-1} + w_{i})|$
\end{claim}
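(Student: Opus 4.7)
The plan is to reduce this to a clean statement about symmetric differences of equinumerous sets. First I would unfold the definitions: since $S^i = S^{i-1} \cup (S^{i-1} + w_i)$, one obtains
\[
S^i \setminus S^{i-1} = (S^{i-1} + w_i) \setminus S^{i-1},
\qquad
S^i \setminus (S^{i-1} + w_i) = S^{i-1} \setminus (S^{i-1} + w_i).
\]
So the claim becomes $|(S^{i-1} + w_i) \setminus S^{i-1}| = |S^{i-1} \setminus (S^{i-1} + w_i)|$, i.e.\ the two ``halves'' of the symmetric difference of $S^{i-1}$ and its shift have the same size.

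Next, I would observe that shifting by $w_i$ is a bijection of $\mathbb{Z}_m$, hence $|S^{i-1} + w_i| = |S^{i-1}|$. The remaining ingredient is the elementary fact that whenever $|A| = |B|$ for finite sets $A, B$, we have $|A \setminus B| = |B \setminus A|$; this follows from writing $|A| = |A \cap B| + |A \setminus B|$ and $|B| = |A \cap B| + |B \setminus A|$ and subtracting. Applying this with $A = S^{i-1}$ and $B = S^{i-1} + w_i$ completes the argument.

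There is no real obstacle here; the statement is essentially a tautology once the definition of $S^i$ is unpacked and the shift-invariance of cardinality is used. The only thing to be slightly careful about is that the identity $|A| = |A \cap B| + |A \setminus B|$ relies on $A, B$ being subsets of a common ambient set (here $\mathbb{Z}_m$), which is obviously the case.
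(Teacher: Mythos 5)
Your proof is correct, and it fills in exactly the argument the paper leaves implicit (the claim is stated there without proof): unfold $S^i = S^{i-1} \cup (S^{i-1}+w_i)$ to identify the two sets as the two halves of the symmetric difference of $S^{i-1}$ and its shift, then use that shifting is a bijection of $\mathbb{Z}_m$ together with $|A| = |A\cap B| + |A\setminus B|$. Nothing is missing.
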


It is again simple to certify that $N^i_{-} \subseteq S^{i} \setminus (S^{i-1} + w_{i})$, by checking for all $x\in N^i_{-}$
that $x\in S^{i-1}$ and $x - w_i \notin S^{i-1}$.
To check that there are no elements missing from $N^{i}_+$ or $N^i_-$ we use the following claim, which follows from the above 
discussion.
\ifx 0 
, it is sufficient to check that
$N^{i}_{-} \dot{\cup} (S^{i-1} + w_i) = N^{i}_{+} \dot{\cup} S^{i-1}$, which must hold as both sides must be equal to $S_i$.
Equivalently, it is enough to certify that $v := \Ind_{N^i_{-}} + \Ind_{S^{i-1} + w_i} - \Ind_{N^i_+} - \Ind_{S^{i-1}} = \vec{0}$, where 
$\Ind_X \in \{0,1\}^m$ is the characteristic vector of set $X$. Note that $v\in \{-1,0,1\}^m$.

This directly implies the following:
\fi 

\begin{claim}
\label{n+-equiv}
Given that $N^i_{+} \subseteq S^{i} \setminus S^{i-1}$ and $N^i_{-} \subseteq S^{i} \setminus (S^{i-1} + w_{i})$,
the following statements are equivalent:
\begin{itemize}
\item $N^i_{+} = S^{i} \setminus S^{i-1}$ and $N^i_{-} = S^{i} \setminus (S^{i-1} + w_{i})$
\item $\Ind_{N^i_{-}} + \Ind_{S^{i-1} + w_i} - \Ind_{N^i_+} - \Ind_{S^{i-1}} = \vec{0}$
\end{itemize}
\end{claim}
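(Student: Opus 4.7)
The plan is to derive the vector identity from a pair of disjoint-union decompositions of $S^i$, and for the converse use the hypothesized subset relations to reinterpret the vector equation as an equality of sets.

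For the forward direction, I would start from the fact that $S^i = S^{i-1} \cup (S^{i-1} + w_i)$, so $S^i$ can be partitioned in two natural ways, namely $S^i = S^{i-1} \sqcup (S^i \setminus S^{i-1})$ and $S^i = (S^{i-1} + w_i) \sqcup (S^i \setminus (S^{i-1}+w_i))$. These two partitions immediately give
\[ \Ind_{S^{i-1}} + \Ind_{S^i \setminus S^{i-1}} \;=\; \Ind_{S^i} \;=\; \Ind_{S^{i-1}+w_i} + \Ind_{S^i \setminus (S^{i-1}+w_i)}. \]
Substituting $N^i_+ = S^i \setminus S^{i-1}$ and $N^i_- = S^i \setminus (S^{i-1}+w_i)$ and rearranging yields the desired vector identity.

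For the reverse direction, the key observation is that the subset hypothesis makes each side of the vector equation the indicator of an honest union rather than a multiset sum. Specifically, $N^i_- \subseteq S^i \setminus (S^{i-1}+w_i)$ means $N^i_-$ is disjoint from $S^{i-1}+w_i$, so $\Ind_{N^i_-} + \Ind_{S^{i-1}+w_i} = \Ind_{N^i_- \cup (S^{i-1}+w_i)}$; analogously $\Ind_{N^i_+} + \Ind_{S^{i-1}} = \Ind_{N^i_+ \cup S^{i-1}}$. The vector identity therefore forces the set equality $N^i_- \cup (S^{i-1}+w_i) = N^i_+ \cup S^{i-1}$. Call this common set $X$. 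Since $N^i_+, N^i_- \subseteq S^i$ and $S^{i-1}, S^{i-1}+w_i \subseteq S^i$, we get $X \subseteq S^i$; conversely $S^{i-1} \cup (S^{i-1}+w_i) \subseteq X$ gives $S^i \subseteq X$, so $X = S^i$. Using disjointness once more, $N^i_+ = X \setminus S^{i-1} = S^i \setminus S^{i-1}$ and $N^i_- = X \setminus (S^{i-1}+w_i) = S^i \setminus (S^{i-1}+w_i)$, as required.

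There is no real obstacle here; the statement is essentially a bookkeeping lemma. The only subtlety worth flagging is that one must use the subset hypotheses on both $N^i_+$ and $N^i_-$ to convert each sum of indicators into an indicator of a union, since otherwise the vector equation would only give equality of multiset sums, which is strictly weaker than set equality.
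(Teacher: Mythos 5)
Your proof is correct and follows essentially the same route the paper intends (the paper leaves the claim as "following from the above discussion," with the underlying idea being exactly your observation that both $N^i_- \,\dot\cup\, (S^{i-1}+w_i)$ and $N^i_+ \,\dot\cup\, S^{i-1}$ must equal $S^i$, with the subset hypotheses supplying the disjointness needed to pass between indicator sums and set unions). Your explicit treatment of the reverse direction via the common set $X$ is a clean write-up of what the paper takes for granted.
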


By \cref{n+-equiv}, we just need to certify that the vector $\Ind_{N^i_{-}} + \Ind_{S^{i-1} + w_i} - \Ind_{N^i_+} - \Ind_{S^{i-1}}$
is the zero vector. A natural way to do this would be to use randomized identity testing. 
\begin{claim}
Given any non-zero vector $v\in\mathbb{R}^m$, 
we have $\Pr_{r\sim_U \{0,1\}^m} [\langle v, r \rangle \neq 0] \geq \frac{1}{2}$.
\end{claim}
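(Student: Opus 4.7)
The plan is to use the standard ``isolate one nonzero coordinate'' conditioning argument. Since $v\neq \vec{0}$, there exists some index $j\in[m]$ with $v_j\neq 0$. I would fix such a $j$ and decompose
\[
\langle v, r\rangle \;=\; v_j r_j \;+\; \sum_{i\neq j} v_i r_i.
\]
Now I would condition on the coordinates $\{r_i\}_{i\neq j}$, which are independent of $r_j$ under the uniform distribution on $\{0,1\}^m$. Let $a = \sum_{i\neq j} v_i r_i$, which is some fixed real number after conditioning.

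The key observation is that after conditioning, $\langle v,r\rangle$ takes one of the two values $a$ (if $r_j=0$) or $a+v_j$ (if $r_j=1$), each with probability $1/2$. Since $v_j\neq 0$, these two values are distinct, so at most one of them equals $0$. Hence, conditionally on the other coordinates, $\Pr[\langle v,r\rangle \neq 0 \mid \{r_i\}_{i\neq j}] \geq 1/2$. Taking expectations over the remaining coordinates gives $\Pr[\langle v,r\rangle \neq 0]\geq 1/2$, as desired.

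There is essentially no technical obstacle here; the entire content of the claim is the observation that a nonzero coefficient on a single independent Bernoulli coordinate suffices to make the inner product nonzero with probability at least $1/2$, regardless of the contributions of the other coordinates. The only care needed is to make sure we condition on the correct set of coordinates (all but one) so that the randomness in $r_j$ alone controls the outcome.
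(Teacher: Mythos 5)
Your proof is correct and is exactly the standard isolation argument this claim rests on; the paper states the claim without proof, treating it as folklore randomized identity testing. Nothing further is needed.
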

In order to verify that for all $i$ the vector $v^i := \Ind_{N^i_{-}} + \Ind_{S^{i-1} + w_i} - \Ind_{N^i_+} - \Ind_{S^{i-1}}$
is the zero vector the idea is to sample a random vector $r$ and compute its inner products with $v^i$ for every
$i$. If any $v^i$ is non-zero, this process will detect the inconsistency with constant probability. By repeating, we can
amplify the probability of success.

Even though this would suffice, it would not be efficient. The reason for this is that we can't afford to explicitly keep
the characteristic vectors. Instead, we will directly maintain the inner products $\langle v^i , r\rangle$ by using an appropriate
data structure. Our data structure should allow us to compute $\langle \Ind_{S^{i-1} + w} , r \rangle$ efficiently for any given $w$.
However, for a random $r$ it seems necessary that such a data structure would need to spend significant time to re-compute the inner product
for any such $w$, when we move from $S^{i-1}$ to $S^i$. To alleviate this issue, instead of using uniformly random vectors, we will
choose a pseudo-random family that is more amenable to shifting. In particular, we will use the vector 
$r = \Ind_{\{i\ :\ ai+b\mod m\,\in\,[0,c]\}}$ for random $a, b, c$.

\begin{definition}[Pseudo-random distribution $\mathcal{D}$]\label{def:pseudorandom}
We define the distribution $\mathcal{D}$ of vectors\\
$r = \Ind_{\{i\ :\ ai+b\mod m\,\in\,[0,c]\}}$ where \\
$a\sim_U \mathbb{Z}_m^*$, \\
$b\sim_U \mathbb{Z}_m$, \\
$c\sim_U\{1,2,4,8,\dots,2^{\lceil\log m\rceil}\}$.
\end{definition}

We show that despite its limited randomness, this distribution can still be used for identity testing with a slightly smaller success
probability, which can again be amplified through repetition.

\begin{lemma}
\label{certlemma}
Given any non-zero vector $v\in\mathbb{R}^m$, 
we have $\Pr_{r\sim \mathcal{D}}[\langle v, r \rangle \neq 0] \geq \frac{1}{\widetilde{O}(\log m)}$.
\end{lemma}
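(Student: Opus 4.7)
The plan is to pick, for each non-zero $v$, a value $c^* \in \{1, 2, 4, \ldots, 2^{\lceil \log m\rceil}\}$ for which $\Pr_{a, b}[\langle v, r\rangle \neq 0 \mid c = c^*] \geq \Omega(1/\log\log m)$. Since $c$ is uniform over the $1 + \lceil\log m\rceil$ powers of $2$, this immediately yields $\Pr_r[\langle v, r\rangle \neq 0] \geq 1/\widetilde O(\log m)$.

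If $\sum_i v_i \neq 0$, I would take $c^* = 2^{\lceil \log m \rceil} \geq m$, so that $[0, c^*]$ covers all of $\mathbb{Z}_m$, making $r$ the all-ones vector and $\langle v, r\rangle = \sum_i v_i \neq 0$ deterministically. Otherwise let $S = \mathrm{supp}(v)$ with $|S| \geq 2$. Here I use an ``isolation'' strategy: choose $c^*$ as the largest power of $2$ with $|S|(c^*+1)/m \leq 1/(C\log\log m)$ for a sufficiently large constant $C$. Setting $X = |\{i \in S : ai+b \bmod m \in [0, c^*]\}|$, we have $\E[X] = |S|(c^*+1)/m = \Theta(1/\log\log m)$ (using that $ai+b$ is uniform on $\mathbb{Z}_m$ for $a \in \mathbb{Z}_m^*, b \in \mathbb{Z}_m$). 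Whenever $X = 1$, say at index $j$, the inner product equals $v_j \neq 0$, so the task reduces to lower-bounding $\Pr[X=1]$ via the elementary inequality $\Pr[X=1] \geq \E[X] - \E[X(X-1)]$.

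The main obstacle is bounding the second factorial moment. After integrating out $b$, each pairwise term reduces to $\Pr[X_i X_j = 1] \leq \tfrac{c^*+1}{m} \cdot \Pr_a[|a(j-i) \bmod m| \leq c^*]$, so the crux is the number-theoretic estimate $\Pr_a[|ad \bmod m| \leq c^*] = O((c^*/m)\log\log m)$ uniformly over $d \neq 0 \bmod m$, even though $a$ is sampled from $\mathbb{Z}_m^*$ rather than from $\mathbb{Z}_m$. I would prove this by case analysis on $g = \gcd(d, m)$: writing $d = gd'$ with $\gcd(d', m/g) = 1$, the residue $ad \bmod m$ is always a multiple of $g$; there are $O(c^*/g)$ such multiples within distance $c^*$ of $0$; and each is achieved by at most $g$ elements of $\mathbb{Z}_m^*$ (lifting a solution mod $m/g$ in at most $g$ ways). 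This yields $O(c^*)$ bad values of $a$ out of $\phi(m) = \Omega(m/\log\log m)$, giving the claimed ratio. Summing over the at most $|S|^2$ pairs in $S$ gives $\E[X(X-1)] = O(\E[X]^2 \log\log m)$, which for our choice of $c^*$ is at most $\E[X]/2$, so $\Pr[X=1] \geq \E[X]/2 = \Omega(1/\log\log m)$.

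A secondary wrinkle is the dense regime $|S| > m/\log\log m$, where even the smallest choice $c^* = 1$ gives $\E[X]$ too large for the isolation argument to go through. For this regime I would analyze the shortest windows $c^* \in \{1, 2\}$ directly: under the constraint $\sum_i v_i = 0$ inherent to this case, the corresponding two-term and three-term sums $v_{i_0} + v_{i_0 + \tau}$ and $v_{i_0} + v_{i_0+\tau} + v_{i_0+2\tau}$ over the random affine image (with $\tau = a^{-1}$, $i_0 = -a^{-1}b$) cannot both identically cancel, and a short averaging argument exhibits one of $c^* \in \{1, 2\}$ producing a non-zero inner product with probability bounded away from zero.
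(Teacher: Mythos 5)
Your proof is correct and, at its core, follows the same route as the paper's: both arguments isolate a single support element of $v$ inside the pseudo-random window by choosing the window length $c$ at scale $m/(k\log\log m)$ (with $k$ the sparsity), and both rest on the same number-theoretic estimate that for $a\sim_U\mathbb{Z}_m^*$ and $d\not\equiv 0 \pmod m$ one has $\Pr\left[\,|ad\bmod m|\le c\,\right]=O\left((c/m)\log\log m\right)$, proved via the $\gcd(d,m)$ reduction and $\phi(m)=\Omega(m/\log\log m)$ --- this is exactly the paper's Lemma~\ref{helper}. Your second-factorial-moment step $\Pr[X=1]\ge\E[X]-\E[X(X-1)]$ is an interchangeable substitute for the paper's direct union bound over colliding support elements; neither buys anything over the other. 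The one genuine difference is your explicit handling of the dense regime $k\gtrsim m/\log\log m$ (plus the trivial $\sum_i v_i\neq 0$ shortcut): the paper's proof tacitly requires an admissible $c\ge 1$ with $c\le m/(10k\log\log m)$, which does not exist when $k$ is that large, so your extra case closes a real gap rather than duplicating work. Your averaging argument there is sound: if for a given $(a,b)$ both the two-term and three-term window sums vanish, then $v_{i_0+2\tau}=0$ at the uniformly distributed index $i_0+2\tau$, so one of the two failure probabilities is at least $k/(2m)=\Omega(1/\log\log m)$ --- note this is $\Omega(1/\log\log m)$ rather than an absolute constant, which still suffices for the stated bound. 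Two cosmetic nits: the number of multiples of $g$ within distance $c^*$ of $0$ is $O(c^*/g+1)$ rather than $O(c^*/g)$, but the extra multiple is $0$ itself, which no $a\in\mathbb{Z}_m^*$ can reach when $d\not\equiv 0$, so your final count of $O(c^*)$ bad values of $a$ stands; and the dense-regime argument does not actually need the hypothesis $\sum_i v_i=0$.
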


The simplicity of the family of random vectors that we use allows us to efficiently update and compute inner products with
characteristic vectors of the form $\Ind_{S + w}$. Notice that $aj \mod m$ defines a permutation of the indices $j\in[m]$ since
$a\in\mathbb{Z}_m^*$. The computation below shows the effect of shifting $S$ by $w$.

\begin{align*}
\langle \Ind_{S + w}, r \rangle 
&= \sum\limits_{j=0}^{m-1} \Ind_{S+w} r_j 
= \sum\limits_{j=0}^{m-1} \Ind_{j\in S+w} \Ind_{aj + b \in [0,c]} 
= \sum\limits_{j=0}^{m-1} \Ind_{j\in S} \Ind_{a(j+w) + b \in [0,c]}\\ 
&= \sum\limits_{j=0}^{m-1} \Ind_{j\in S} \Ind_{aj \in [-b-aw,-b+c-aw]} 
= \sum\limits_{j=0}^{m-1} \Ind_{j\in a S} \Ind_{j \in [-b-aw,-b+c-aw]} \\
&= \sum\limits_{j\in[-b-aw,-b+c-aw]} \Ind_{j\in a S} 
\end{align*}
(Note that all the operations are in $\mathbb{Z}_m$)

From the above it becomes clear that 
the required inner product is equivalent to computing the number of elements of the set 
$aS = \{ax\ :\ x\in S\}$ that lie in a specified interval. To be able to efficiently compute these interval sums we use a
data structure that allows insertion of elements and range queries in logarithmic time. Such a data structure can be implemented 
using a Binary Search Tree. In order to update this data structure, one needs to insert all the ``permuted'' newly attainable subset
sums at step $i$, i.e. $a N^i_+$.

We will perform $2n$ queries (one for $S_{i-1}$ and one for $S_{i-1}+w_i$, for every step $i$), 
each taking $O(\log m)$ time. We will also perform at most $m$ insertions (one for every distinct subset sum), 
each taking $O(\log m)$ time. Thus the overall runtime is $\widetilde{O}(n + m)$ and succeeds with probability at least 
$1/\widetilde{O}(\log m)$, which can easily be amplified by repetition.

\section{From Certificate to Algorithm via Linear Sketching}
\label{algorithm}

So far we have seen how to efficiently check if $\Ind_{N^i_{-}} + \Ind_{S^{i-1} + w_i} - \Ind_{N^i_+} - \Ind_{S^{i-1}} = \vec{0}$ when provided a certificate listing the elements of $N^i_+$ and $N^i_-$. If the set $N^i_+$ only lists a strict subset of the elements of $S^i \setminus S^{i-1}$, our method would efficiently detect that. 
In order to make the process constructive, we want to be able to identify a missing element $x \in S^i \setminus S^{i-1} \setminus N^i_+$ in such cases. This way we can start from $N^i_+ = \emptyset$ and continue growing the set until we recover all elements. We would work similarly for recovering $N^i_-$. The main property that enables us to do so is summarized in the following claim which is an extension of Claim~\ref{n+-equiv}.

\begin{claim}
  Given that $N^i_{+} \subseteq S^{i} \setminus S^{i-1}$ and $N^i_{-} \subseteq S^{i} \setminus (S^{i-1} + w_{i})$, the indices of the positive non-zero entries of the vector
  $\Ind_{N^i_{-}} + \Ind_{S^{i-1} + w_i} - \Ind_{N^i_+} - \Ind_{S^{i-1}}$ correspond to the elements missing from $N^i_+$, i.e. $S^i \setminus S^{i-1} \setminus N^i_+$, while the indices of the negative non-zero entries correspond to the elements missing from $N^i_-$.
\end{claim}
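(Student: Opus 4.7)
The plan is to perform a pointwise case analysis on each coordinate $v_j$ of the vector
\[ v = \Ind_{N^i_-} + \Ind_{S^{i-1}+w_i} - \Ind_{N^i_+} - \Ind_{S^{i-1}}. \]
First I would exploit the hypotheses $N^i_+ \subseteq S^i \setminus S^{i-1}$ and $N^i_- \subseteq S^i \setminus (S^{i-1}+w_i)$ to pin down the possible joint membership patterns. Since $S^i = S^{i-1} \cup (S^{i-1}+w_i)$, any $j \in N^i_+$ must satisfy $j \in S^{i-1}+w_i$ and $j \notin S^{i-1}$; symmetrically, any $j \in N^i_-$ must satisfy $j \in S^{i-1}$ and $j \notin S^{i-1}+w_i$. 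In particular, $N^i_+$ and $N^i_-$ are disjoint, and each can contribute nonzero to $v_j$ only in one of the two asymmetric configurations below.

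Next I would split on the pair $(a,b) = (\Ind_{j\in S^{i-1}}, \Ind_{j\in S^{i-1}+w_i})$. If $(a,b) = (0,0)$, then $j \notin S^i$, so $j \notin N^i_+ \cup N^i_-$ and $v_j = 0$. If $(a,b) = (1,1)$, then $j \in N^i_+$ would force $a=0$ and $j \in N^i_-$ would force $b=0$, so both indicators vanish and $v_j = 0 + 1 - 0 - 1 = 0$. The interesting cases are the asymmetric ones: when $(a,b) = (0,1)$ we have $j \in S^i \setminus S^{i-1}$ and $j \notin N^i_-$, hence $v_j = 1 - \Ind_{j \in N^i_+}$, which equals $+1$ exactly when $j \in (S^i \setminus S^{i-1}) \setminus N^i_+$; when $(a,b) = (1,0)$ we have $j \in S^i \setminus (S^{i-1}+w_i)$ and $j \notin N^i_+$, hence $v_j = \Ind_{j \in N^i_-} - 1$, which equals $-1$ exactly when $j \in (S^i \setminus (S^{i-1}+w_i)) \setminus N^i_-$.

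Combining the four cases gives the claim: the positive entries of $v$ index precisely $S^i \setminus S^{i-1} \setminus N^i_+$, while the negative entries index precisely $S^i \setminus (S^{i-1}+w_i) \setminus N^i_-$. There is no real obstacle — the argument is mechanical once the membership constraints imposed by the two subset hypotheses are unpacked, and Claim~\ref{n+-equiv} drops out as the special case where both missing sets are empty. The only mild care needed is to verify that the coefficients $\pm 1$ and never $\pm 2$ appear, which is exactly what the disjointness of $N^i_+$ and $N^i_-$ (and their interaction with $S^{i-1}$ and $S^{i-1}+w_i$) guarantees.
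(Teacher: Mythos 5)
Your proof is correct and is exactly the pointwise case analysis on $\bigl(\Ind_{j\in S^{i-1}},\Ind_{j\in S^{i-1}+w_i}\bigr)$ that the paper leaves implicit (it states the claim as following from the preceding discussion without a written proof). The key step — using $S^i = S^{i-1}\cup(S^{i-1}+w_i)$ together with the two subset hypotheses to rule out contributions from $N^i_+$ and $N^i_-$ in the $(1,1)$ and ``wrong-sided'' cases — is handled correctly, so nothing is missing.
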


\emph{Linear sketching} allows us to go beyond testing whether a vector is zero and identify the index of a non-zero element through inner products with 
carefully constructed random vectors. Consider a vector $v = (0,\dots,0,a,0,\dots,0)$ that only has a single non-zero entry at position $i^*$. One way to find its index is by multiplying by the all-ones vector to obtain $a = \langle v , \vec{1} \rangle$, and multiplying with the vector $Id = (1,2,\dots,m)$ to obtain
$a i^* = \langle v , Id \rangle$. Then, $i^*$ can be found by dividing the two values, $i^* = \frac {\langle v , Id \rangle} {\langle v , \vec{1} \rangle}$.
For an arbitrary vector $v$ containing $k$ non-zero entries, the same idea can be applied after randomly subsampling entries with probability $\approx \frac 1 k$ to isolate a single non-zero entry. For a randomly sampled set $R$, this corresponds to inner products with the vectors $\vec{1}_R$ and $Id_R$ where all entries outside $R$ are zeroed out. Linear sketching does not require knowledge of the sparsity parameter $k$ but constructs these random sets with different subsampling probabilities $2^{-1}, 2^{-2}, \dots , 2^{-\lceil\log m\rceil}$.

For our purposes, we show that the constructed sets $R$ need not be perfectly random but suffices to be pseudo-random. We will use the distribution of sets $R$ given by \cref{def:pseudorandom}. Such a set $R$ has the form $\{i\,:\, a i \in [l,r]\}$ for some interval $[l,r] = [-b,-b+c]$.  We define our sketch for some vector $v$ to be 
\begin{align*}
sketch_{a}(v, [l,r]) \triangleq \begin{pmatrix}  {\langle v , \vec{1}_R \rangle} \\ {\langle v , Id_R \rangle} \end{pmatrix} = \sum\limits_{i\,:\, a i \in [l,r]} 
\begin{pmatrix}v_i \\ iv_i\end{pmatrix}
\end{align*}

We show that this sketch recovers a non-zero entry of $v$ with non-trivial probability.

\begin{lemma}
\label{weaksketch}
Given any non-zero vector $v\in\mathbb{R}^m$, 
we have $\Pr_{r \sim \mathcal{D}}\left[v_i \neq 0 \text{ for } i = \frac {\langle v , Id_r \rangle} {\langle v , \vec{1}_r \rangle} \right] \geq \frac{1}{\widetilde{O}(\log m)}$.
\end{lemma}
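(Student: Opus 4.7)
The plan is to reduce to lower bounding $\Pr[X = 1]$, where $X = |R \cap \mathrm{supp}(v)|$ counts the nonzero coordinates of $v$ caught by the pseudorandom set $R = \{i : ai + b \in [0, c]\}$, and then to apply a first- and second-moment analysis at the right scale $c$. The reduction is immediate: when $X = 1$ with caught index $i^*$, one has $\langle v, \vec{1}_r \rangle = v_{i^*} \neq 0$ and $\langle v, Id_r \rangle = i^* v_{i^*}$, so their ratio equals $i^*$, a genuine support element.

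Let $k = \|v\|_0$. Uniformity of $b$ gives $\E[X \mid c] = k(c+1)/m$, so I would pick $c$ as a power of two making this equal to a target $\alpha = \Theta(1/\log \log m)$; since $c$ is uniform on $\lceil \log m \rceil + 1$ values, the right one is drawn with probability $\Omega(1/\log m)$. For the second moment, a direct overlap calculation for two length-$(c+1)$ cyclic intervals gives $\Pr_b[i, j \in R \mid a] = \max\{0, c+1 - |a(j-i)|_m\}/m$ for $i \neq j$, where $|x|_m = \min(x, m - x)$. Averaging over $a$ then requires a uniform-in-$d$ bound $|\{a \in \mathbb{Z}_m^* : |ad|_m \leq c\}| = O(c)$, which I would prove by writing $g = \gcd(d, m)$, noting that any $x$ with $|x|_m \leq c$ admits at most $g$ preimages under $a \mapsto ad$ and must itself be a multiple of $g$, so the count is at most $g(2\lfloor c/g \rfloor + 1) = O(c)$. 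Combined with $\phi(m) = \Omega(m / \log \log m)$, this yields $\Pr_{a, b}[i, j \in R] = O(c^2 \log \log m / m^2)$, and summing over the $k(k-1)$ pairs in $\mathrm{supp}(v)$ gives $\E[X(X-1) \mid c] = O(\alpha^2 \log \log m)$.

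To finish, the pointwise inequality $X \mathbf{1}_{X \geq 2} \leq X(X-1)$ implies $\Pr[X = 1] \geq \E[X] - \E[X(X-1)]$, so $\Pr[X = 1 \mid c] \geq \alpha - O(\alpha^2 \log \log m) = \Omega(1/\log \log m)$ when $\alpha$ is a small enough constant times $1/\log \log m$. Multiplying by the $\Omega(1/\log m)$ probability of hitting the correct $c$ yields the required $1/\widetilde{O}(\log m)$.

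The main obstacle is the number-theoretic estimate on $|\{a \in \mathbb{Z}_m^* : |ad|_m \leq c\}|$: it must hold uniformly in every nonzero $d$, and the argument depends on the coset structure of the image of $a \mapsto ad$ together with the density $\phi(m)/m$. A secondary subtlety arises for very dense $v$ (with $k \gtrsim m/\log \log m$), where even $c = 1$ overshoots the target $\alpha$; this regime needs a separate argument, for instance exploiting that when $v$ has very large support many candidate indices $i^*$ automatically fall inside $\mathrm{supp}(v)$ even when more than one coordinate is caught.
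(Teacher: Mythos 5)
Your proposal is correct and, underneath the packaging, it is essentially the paper's argument. The paper also reduces the lemma to showing that $R=\{i:\ ai+b\in[0,c]\}$ isolates exactly one support element with probability $1/\widetilde{O}(\log m)$; it then writes $\Pr[|S\cap R|=1]=\sum_{i\in S}\Pr[i\in R]\cdot\Pr[\,\text{no other } j\in R \mid i\in R\,]$ and bounds the second factor by a union bound over $j$ of the per-pair collision probability $\Pr_a\left[\left|a(i-j)\bmod m\right|<\tfrac{m}{10k\log\log m}\right]\le\tfrac{1}{2k}$ (Lemma~\ref{helper}), whereas you sum the same pairwise collision probabilities into $\E[X(X-1)]$ and apply the Bonferroni-type inequality $\Pr[X=1]\ge\E[X]-\E[X(X-1)]$; the two wrappers are interchangeable and yield the same $\Omega\!\left(\tfrac{1}{\log m\,\log\log m}\right)$. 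Regarding your two stated obstacles: (1) the uniform-in-$d$ count is exactly the computation inside Lemma~\ref{helper} (reduce via $g=\gcd(d,m)$ to $d'$ coprime to $m'=m/g$ and use $\phi(m)\ge m/(2\log\log m)$); the only slip in your sketch is the stray $+1$ in $g(2\lfloor c/g\rfloor+1)$, which would be $g\gg c$ when $g>c$ — but that term counts the residue $x=0$, which is impossible since $ad\equiv 0$ with $a\in\mathbb{Z}_m^*$ forces $d\equiv 0$, so the bound is genuinely $O(c)$. (2) The dense regime $k\gtrsim m/\log\log m$ is not handled by the paper either: its proof asserts that with probability $1/O(\log m)$ the sampled $c$ satisfies $\tfrac{m}{20k\log\log m}\le c\le\tfrac{m}{10k\log\log m}$, which presupposes that this interval meets the sampling set $\{1,2,4,\dots\}$, i.e.\ that $k\le\tfrac{m}{10\log\log m}$; the same implicit restriction appears in Lemma~\ref{validsketches} through the hypothesis $\ell\le\tfrac{m}{10k\log\log m}$ with $\ell\ge 1$. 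So you are not missing anything the paper supplies on that point.
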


Notice that again the sketch of a vector $v$ shifted by $w$, denoted as $v^{+w}$, can still be written in terms of a sketch of the original vector. 
\begin{align*}
sketch_{a}(v^{+w}, [l,r])
=& \sum\limits_{i\,:\,a i\in[l,r]} \begin{pmatrix}v_{i-w}\\ i v_{i-w}\end{pmatrix}
= \sum\limits_{i\,:\,a i\in[l-aw,r-aw]} \begin{pmatrix}v_{i}\\ (i+w) v_{i}\end{pmatrix}
= \sum\limits_{i\,:\,ai\in[l-aw,r-aw]} \begin{pmatrix}1 & 0 \\ w & 1\end{pmatrix} \begin{pmatrix}v_{i}\\ iv_{i}\end{pmatrix}\\
=& \begin{pmatrix}1 & 0 \\ w & 1\end{pmatrix} sketch_{a}(v,[l-aw,r-aw])
\end{align*}
Similar to \cref{certificate}, we can build a data structure for a given parameter $a$ to efficiently compute the sketch for any range $[l,r]$. See further details in Appendix~\ref{sec:datastructure}.

\paragraph{Recovering multiple non-zeros} Our discussion so far has focused on identifying a single non-zero entry of the vector $v^i = \Ind_{N^i_{-}} + \Ind_{S^{i-1} + w_i} - \Ind_{N^i_+} - \Ind_{S^{i-1}}$. However, once a sketch is used to find a single non-zero, it won't give any additional non-zero entries.
To recover more entries, we need a new sketch with fresh randomness which can be expensive to compute and maintain. Computing the sketch $sketch_{a}(v, [l,r])$ for different parameters $a$ would require rebuilding a new data structure from scratch. On the other hand, computing the sketch for a different interval $[l,r]$ 
but the same parameter $a$ can be efficiently performed using a single data structure.

We show that this is sufficient to recover a constant fraction of the non-zeros. 
For a given parameter $a$, we can compute sketches for disjoint windows, each yielding a different index of a non-zero entry if the corresponding sketch is valid.

\begin{definition}[Valid sketch]
$sketch_a(v,[l,r]) = \begin{pmatrix} s_1 \\ s_2 \end{pmatrix}$ is \emph{valid} if it identifies an index $i = \frac{s_2}{s_1}$ for a non-zero element 
$v_i \neq 0$ in the corresponding set where $ai \in [l,r]$.
\end{definition}

We show that for a vector with $k$ non-zeros, computing sketches for all windows 
$[0, \ell-1], [\ell, 2 \ell-1],$ and so on, with a window size $\ell \approx \frac m k$, yields at least half of the non-zero elements of $v$ in expectation.

\begin{lemma}
\label{validsketches}
Given any non-zero vector $v\in\mathbb{R}^m$ with $k$ non-zero entries. For any $\ell \le \frac{m}{10 k \log\log m}$,
it holds that 
$$\E_{a \sim_U \mathbb{Z}_m^*}\Big[ \# \{j\in \{1,\dots, \left\lceil m/\ell\right\rceil\} \ :\ sketch_a(v,[(j-1)\ell, j\ell-1]) \text{ is valid} \} \Big] \geq \frac {k} {2}$$
\end{lemma}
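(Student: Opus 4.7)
The plan is to count valid sketches through a simple sufficient condition: whenever a window $[(j-1)\ell, j\ell-1]$ contains the image $ai_t$ of exactly one non-zero position $i_t$ of $v$, the sketch evaluates to $(v_{i_t},\, i_t v_{i_t})$, so $s_2/s_1 = i_t$ and the sketch is valid. Let $i_1,\dots,i_k$ denote the non-zero positions of $v$, and for each $t$ let $Y_t$ be the indicator that no other non-zero $i_s$ (with $s \neq t$) satisfies $\lfloor ai_s/\ell\rfloor = \lfloor ai_t/\ell\rfloor$. Then deterministically $\sum_t Y_t$ equals the number of windows containing exactly one non-zero of $v$, which lower-bounds the number of valid sketches. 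It therefore suffices to show $\Pr_a[Y_t = 1] \geq 1/2$ for every $t$ and then sum over $t$ to obtain the claimed bound of $k/2$.

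Next I would use a union bound: $\Pr_a[Y_t = 0] \leq \sum_{s \neq t}\Pr_a[ai_t \text{ and } ai_s \text{ lie in the same window}]$. Two integers in $\{0, \dots, m-1\}$ that lie in a common length-$\ell$ window differ by less than $\ell$, so each event implies $|ai_t - ai_s \bmod m| < \ell$ in the cyclic sense from the preliminaries. Writing $d = i_t - i_s \bmod m$, which is non-zero because $i_t \neq i_s$, each term is at most $\Pr_{a \sim_U \mathbb{Z}_m^*}[|ad \bmod m| < \ell]$.

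The main technical step is bounding this last probability by $O(\ell \log\log m / m)$ uniformly in non-zero $d$. If $\gcd(d,m) = 1$, then $a \mapsto ad \bmod m$ is a bijection of $\mathbb{Z}_m^*$, so the probability is at most $(2\ell - 1)/\phi(m)$. For general $d$, letting $g = \gcd(d, m)$, the distribution of $ad \bmod m$ is uniform on $g \cdot (\mathbb{Z}_{m/g})^*$ by a CRT-style argument: the reduction map $\mathbb{Z}_m^* \to (\mathbb{Z}_{m/g})^*$ is surjective with equal-size fibers of cardinality $\phi(m)/\phi(m/g)$, and $d/g$ is coprime to $m/g$. At most $2\lceil \ell/g \rceil$ values in $g \cdot (\mathbb{Z}_{m/g})^*$ have absolute value less than $\ell$, so the probability is at most $2\lceil \ell/g\rceil / \phi(m/g)$. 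In either regime, the classical estimate $\phi(n) = \Omega(n / \log\log n)$ yields the desired bound.

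Summing the union bound over the at most $k-1$ other non-zero positions gives $\Pr_a[Y_t = 0] \leq O(k\ell \log\log m / m)$, which is at most $1/2$ under the hypothesis $\ell \leq m/(10 k \log\log m)$ once the constant $10$ is chosen large enough to absorb the hidden constants in the totient lower bound. The main obstacle is precisely the $\gcd(d, m) > 1$ case, where $ad \bmod m$ is not uniform on $\mathbb{Z}_m^*$; but passing to the subgroup $g \cdot (\mathbb{Z}_{m/g})^*$ reduces the bound to a totient estimate on the smaller modulus $m/g$ and preserves the analysis up to constants.
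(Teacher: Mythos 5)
Your proposal is correct and follows essentially the same route as the paper: the paper's auxiliary Lemma~\ref{helper} is precisely your per-element union bound, including the reduction of the case $\gcd(d,m)=g>1$ to the smaller modulus $m/g$ and the totient estimate $\phi(n)=\Omega(n/\log\log n)$. The one spot needing a word of care is the regime $g\ge\ell$, where your bound $2\lceil\ell/g\rceil/\phi(m/g)$ need not be $O(\ell\log\log m/m)$ (e.g.\ $d=m/2$ gives $\phi(m/g)=1$); there the probability is in fact $0$, since every element of $g\cdot(\mathbb{Z}_{m/g})^*$ has cyclic absolute value at least $g>\ell$, so the conclusion stands.
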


By Markov's inequality, this means that a single data structure is sufficient to recover a constant fraction of the non-zero entries with constant probability.
Thus, with $\text{poly} \log (nm)$ data structures to compute $sketch_a$ for different values of $a$, we can find all non-zeros with high probability.

Even though by \cref{validsketches} we can take any window size $\ell \leq \frac{m}{10k\log\log m}$, the time needed to iterate over all windows is $O(\frac{m}{\ell})$ and so
we need to pick an $\ell \approx \frac{m}{k}$, so that we only spend time proportional to the sparsity of $v$.

\paragraph{Estimating the window size}
As mentioned before, we need to identify an appropriate window size $\ell$, in order to efficiently recover a constant fraction of non-zeros.
If we pick a window size that is too large, we will recover few or none.
On the other hand, if the window size is too small, the time spent iterating over $\lceil m/\ell\rceil$ windows will be much larger than $k$.
To identify an appropriate window size, we start with $\ell = 1$ and keep doubling until we find an $\ell$ for which a significant fraction of its windows yield valid sketches. 
We can estimate this fraction of windows that yield valid sketches through sampling.

\section{Main Algorithm}

To describe our algorithm, we will denote by $BST(a)$ an initially empty data structure that can efficiently maintain a vector $v\in\mathbb{Z}_m^m$, and 
allows changing entries of $v$ and computing $sketch_a(v,[l,r])$ for any given interval $[l,r]$. Given such a data structure $DS$, we denote by 
$DS.\textproc{Sketch}(l,r)$ this sketch for the corresponding vector $v$.
Both of these operations take $O(\log m)$ time.

We will use different parameters $a_j\in\mathbb{Z}_m^*$ for $j\in[L]$.
For each such $j$, we keep a data structure $DS_j$ which is initialized as $BST(a_j)$ and maintains vector $\Ind_{S^{i-1}}$ for every iteration $i$.

Furthermore, we will maintain a set $N^i = \{(z,+1)\ |\ z\in N^i_+\}\cup \{(z,-1)\ |\ z \in N^i_-\}$,
which keeps the elements of $N^i_+$ and $N^i_-$ with their corresponding sign. For each $i$, $N^i_+$ and $N^i_-$ are initialized as 
$\emptyset$ and grow as more elements are discovered, until $N^i_+ = S^i \setminus S^{i-1}$ and $N^i_- = S^i \setminus (S^{i-1} + w_i)$ respectively.
To be able to efficiently compute sketches of $N^i$, we keep a data structure $DN^i_j$ which is initialized as $BST(a_j)$ and maintains vector $\Ind_{N^i_+} - \Ind_{N^i_-}$
for every iteration $i$ and parameter $a_j$ with $j\in[L]$.

\begin{algorithm}[ht]
\caption{Finding all Modular Subset Sums}
\begin{algorithmic}[1]
\State Initialize $L=\widetilde{O}(\log(nm))$
\State Pick $a_1, \dots, a_L$ uniformly at random from $\mathbb{Z}_m^*$.
\For{$j=1\dots L$}
\State Initialize $DS_j$ as a data structure $\textproc{BST}(a_j)$ with element $(0,+1)$.
\EndFor
\State $S\leftarrow \{0\}$
\For{$i=1\dots n$}
	\State $N^i\leftarrow \emptyset$
	\For{$j=1\dots L$}
		\State Initialize $DN^i_j$ as a data structure $\textproc{BST}(a_j)$ with the elements of $N^i$.
		\State $\ell \leftarrow$ \textproc{EstimateWindowSize}$(a_j, DS_j,DN^i_j, w_i)$
		\If {$\ell \neq \bot$}
			\For{each window $B = [(p-1)\ell, p\ell-1]$}
				\State $N^i\leftarrow N^i \cup \textproc{FindNonZero}(a_j, DS_j, DN^i_j, B, w_i)$
			\EndFor
		\EndIf
	\EndFor
	\For {$(z,+1) \in N^i$}  
	\State $S\leftarrow S\cup \{z\}$
	\For{$j=1\dots L$}
			\State $DS_j.\textproc{insert}(z, +1)$
		\EndFor
	\EndFor
\EndFor
\State \Return $S$
\end{algorithmic}
\label{algo1}
\end{algorithm}

\subsection{Finding Non-Zero Elements}

In this section we describe the procedure that finds a non-zero element of the vector $v^i = \Ind_{N^i_{-}} + \Ind_{S^{i-1} + w_i} - \Ind_{N^i_+} - \Ind_{S^{i-1}}$ for a given window
while $w_i$ is being considered.
The first line of \cref{a:findnonzero} efficiently computes the value of $sketch_a(v^i,[l,r])$ as 
\[ sketch_a(\Ind_{S^{i-1}+w_i},[l - a w, r - aw]) - sketch_a(\Ind_{S^{i}}, [l,r]) - sketch_a(\Ind_{N^i_+} - \Ind_{N^i_-}, [l,r]) \]
using the available data structures. Line~3 finds the index of the non-zero entry, assuming that the sketch is valid. Line~4 evaluates the vector $v^i$ at the found index $z$ and Line~5
checks that the sketch was indeed valid.
\begin{algorithm}[ht]
\caption{Find Non-zero element in window}
\begin{algorithmic}[1]
\Function{FindNonZero}{$a, DS, DN, [l,r], w$}
	\State $\begin{pmatrix}s_1\\ s_2\end{pmatrix} \leftarrow 
	\begin{pmatrix}1 & 0 \\ w & 1\end{pmatrix} DS.\textproc{Sketch}(l - a w, r - aw) - DS.\textproc{Sketch}(l,r) - DN.\textproc{Sketch}(l,r)$
	\State $z \leftarrow s_2 / s_1$
	\State $sign \leftarrow DS[z-w] - DS[z] - DN[z]$
	\If {$sign \in\{-1, +1\} \textbf{ and } a z \in [l,r]$} 
		\State \Return $\left(z, sign\right)$
	\Else 
		\State \Return $\bot$
	\EndIf
\EndFunction
\end{algorithmic}
\label{a:findnonzero}
\end{algorithm}

\subsection{Estimating the Window size}

To identify an appropriate window size $\ell$, we start with $\ell = 1$ and keep doubling until we find an $\ell$ for which a significant fraction of its windows yield valid sketches. 
At every step, we estimate the fraction of windows that yield valid sketches by randomly sampling a window and checking whether the corresponding sketch is valid using 
\cref{a:findnonzero}.

\begin{algorithm}[ht]
\caption{Estimate Window Size}
\begin{algorithmic}[1]
\Function{EstimateWindowSize}{$a, DS, DN, w$}
	\For{$\ell$ \textbf{in} $\{2^0, 2^1, \dots, 2^{\lfloor\log m\rfloor}\}$}
		\State \textproc{cnt}$\leftarrow 0$
		\MRepeat \ $T = \widetilde{O}(\log (nm))$ times 
				\State Pick uniformly random window $B = [(p-1)\ell, p\ell-1]$
				\If {\textproc{FindNonZero}$(a, DS, DN, B, w) \neq \bot $}
				\State \textproc{cnt} $\leftarrow \textproc{cnt} + 1$
				\EndIf
		\EndRepeat
		\If {$\textproc{cnt} \geq \frac{T}{200\log\log m}$}
			\State \Return $\ell$
		\EndIf
	\EndFor
	\State \Return $\bot$
\EndFunction
\end{algorithmic}
\label{a:estimatewindowsize}
\end{algorithm}

\subsection{Analysis}
The goal of this section is to analyze the correctness and running time of \cref{algo1}, as summarized in the following theorem:

\begin{theorem}
\cref{algo1} runs in time $\widetilde{O}(n + m)$ and returns the set of attainable subset sums with high probability.
\end{theorem}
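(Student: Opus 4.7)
The plan is to prove both correctness and runtime by induction on $i$, maintaining the invariant that at the start of iteration $i$ we have $S = S^{i-1}$ and every $DS_j$ stores $\Ind_{S^{i-1}}$. Assuming the invariant, I would argue that after the inner loop (lines~8--17) we have $N^i_+ = S^i \setminus S^{i-1}$ with high probability, so the block on lines~18--21 correctly updates $S$ to $S^i$ and each $DS_j$ to $\Ind_{S^i}$. The verification inside \cref{a:findnonzero} (checking $sign \in \{-1,+1\}$ and $az \in [l,r]$) ensures that every pair $(z, sign)$ appended to $N^i$ is a genuine non-zero entry of the residual vector $v^i = \Ind_{N^i_{-}} + \Ind_{S^{i-1} + w_i} - \Ind_{N^i_+} - \Ind_{S^{i-1}}$. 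Thus the precondition $N^i_{+} \subseteq S^{i} \setminus S^{i-1}$, $N^i_{-} \subseteq S^{i} \setminus (S^{i-1} + w_{i})$ is preserved throughout, and the extension of \cref{n+-equiv} reduces the correctness of the inner loop to showing that $v^i = \vec{0}$ at termination.

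The probabilistic core is bounding the number of outer rounds needed to zero out $v^i$. Let $k$ denote the current sparsity of $v^i$ at the start of round $j$. For any $\ell \leq m/(10 k \log\log m)$, \cref{validsketches} and linearity imply that a uniformly random window is valid with probability at least $\ell k /(2m) \geq 1/(20 \log\log m)$ over the choice of $a$. A Chernoff bound over the $T = \widetilde{O}(\log(nm))$ samples in \textproc{EstimateWindowSize} then yields, with probability $1 - 1/\mathrm{poly}(nm)$, either a returned $\ell$ within a constant factor of $m/k$, or $\bot$ only when $k = 0$. Conditioned on a good $\ell$, iterating over the $\lceil m/\ell \rceil$ disjoint windows yields at least $k/2$ valid sketches in expectation, hence at least $k/4$ with constant probability by a reverse Markov bound, shrinking the sparsity by a constant factor. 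Since the $L = \widetilde{O}(\log(nm))$ rounds use independent $a_j$'s, standard amplification drives $k$ to $0$ within round $L$ with probability $1 - 1/\mathrm{poly}(nm)$, and a union bound over $i \in [n]$ yields overall correctness with high probability.

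For the runtime, each \textproc{BST} operation costs $O(\log m)$. The total number of insertions into all $DS_j$ across all $i$ is at most $L \cdot |S^n| = \widetilde{O}(m)$. For each $(i,j)$, \textproc{EstimateWindowSize} performs at most $O(T \log m) = \widetilde{O}(1)$ sketch queries before deciding on $\ell$. The subsequent windowing loop performs $\lceil m/\ell \rceil = \widetilde{O}(k_i^{(j)})$ sketch queries, where $k_i^{(j)}$ is the sparsity at the start of round $j$ of iteration $i$. Because $k_i^{(j)}$ shrinks by a constant factor across $j$ with high probability, $\sum_{j=1}^L k_i^{(j)} = \widetilde{O}(k_i^{(1)}) = \widetilde{O}(|S^i \setminus S^{i-1}|)$, and summing over $i$ telescopes to $\widetilde{O}(|S^n|) = \widetilde{O}(m)$. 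Adding the $\widetilde{O}(n)$ overhead for initializing the $DN^i_j$ structures across all $(i,j)$ yields total runtime $\widetilde{O}(n + m)$.

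The main obstacle will be the adaptive coupling between the shrinking sparsity $k_i^{(j)}$ and the choice of $\ell$ inside \textproc{EstimateWindowSize}: one must argue that as $k$ decreases geometrically across rounds, the procedure keeps picking $\ell$ of order $m/k$, so that the aggregate work telescopes linearly in $|S^i \setminus S^{i-1}|$ rather than paying the naive $L \cdot |S^i \setminus S^{i-1}|$ bound; this requires simultaneously controlling the Chernoff failure probability of the window-size estimator and the reverse-Markov failure probability of the recovery step in every one of the $nL$ round pairs.
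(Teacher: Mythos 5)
Your overall architecture is the same as the paper's: use \cref{validsketches} to guarantee many valid windows at scale $\ell \approx m/k$, use the Chernoff analysis of \textproc{EstimateWindowSize} to certify the returned $\ell$, amplify over the $L$ independent parameters $a_j$, and charge the window iterations to the recovered non-zeros. However, two steps in your probabilistic bookkeeping would fail as written. First, the order of the Markov and Chernoff applications is inverted. The bound from \cref{validsketches} is an expectation \emph{over $a$}; for a fixed $a$ the fraction of valid windows can be arbitrary, so you cannot feed the $a$-averaged success probability $\ell k/(2m)$ into a Chernoff bound over the $T$ window samples, which are drawn \emph{conditionally on $a$}. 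The paper first applies reverse Markov over $a$ (using that the number of valid windows is at most $k$) to conclude that with probability $\Omega(1/\log\log m)$ the parameter $a$ is ``helpful,'' i.e.\ at least $k/4$ of the windows of size $\ell^*$ are valid for that specific $a$; only then does Chernoff over the window samples show that \textproc{EstimateWindowSize} accepts some $\ell\le\ell^*$ (\cref{lem:density2}). Your claim that ``$\bot$ is returned only when $k=0$'' is consequently false: an unhelpful $a$ with $k>0$ can legitimately yield $\bot$, and correctness is rescued only because enough of the $L$ rounds use helpful $a_j$'s.

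Second, the runtime telescoping is not justified: the sparsity $k_i^{(j)}$ shrinks by a constant factor only in helpful rounds, which occur with probability $\Omega(1/\log\log m)$ per round, not with high probability, so ``$k_i^{(j)}$ shrinks by a constant factor across $j$ w.h.p.'' is wrong. Fortunately this claim is also unnecessary: the naive bound $\sum_j k_i^{(j)} \le L\cdot k_i^{(1)}$ already gives $\widetilde{O}(|S^i\setminus S^{i-1}|)$ since $L$ is polylogarithmic. What you do need, and only gesture at, is an upper bound on $\lceil m/\ell\rceil$ for whatever $\ell$ is returned, even in unhelpful rounds; this is exactly \cref{lem:density}, which guarantees w.h.p.\ that a returned $\ell$ has at least $\frac{m}{400\ell\log\log m}$ valid sketches, hence $m/\ell \le 400k\log\log m$, so the window iterations can be charged to the $O(m)$ valid sketches occurring over the whole execution. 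With these two repairs your proof coincides with the paper's.
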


Recall that our goal is to recover all non-zero entries of $S^i \setminus S^{i-1}$. 
Given the current sets $N^{i}_+$ and $N^{i}_-$ we do that by identifying non-zero entries of the vector
\[ v^i = \Ind_{N^i_{-}} + \Ind_{S^{i-1} + w_i} - \Ind_{N^i_+} - \Ind_{S^{i-1}} \]

By Lemma~\ref{validsketches}
we know that for any $\ell \le \frac{m}{10 k \log\log m}$ it holds that 
$$\E_{a \sim_U \mathbb{Z}_m^*}\Big[ \# \{j\in \{1,\dots, \left\lceil m/\ell\right\rceil\} \ :\ sketch_a(v^i,[(j-1)\ell, j\ell-1]) \text{ is valid} \} \Big] \geq \frac {k} {2}$$
In particular, there exists an $\ell^*$ with $\frac{m}{20 k \log\log m} \leq \ell^* \leq \frac{m}{10 k \log\log m}$ 
for which the above holds and $\ell^*$ is a power of $2$.
By applying Markov's inequality we get that 
$$\Pr_{a \sim_U \mathbb{Z}_m^*}\Big[ \# \{j\in \{1,\dots, \left\lceil m/\ell^*\right\rceil\} \ :\ sketch_a(v^i,[(j-1)\ell^*, j\ell^*-1]) \text{ is valid} \} \geq \frac{k}{4} \Big] 
\geq \frac {1}{100\log\log m}$$

Whenever this happens for a chosen $a$, we call the parameter $a$ \emph{helpful} for vector $v^i$.

We argue that if a helpful $a$ is chosen, \textproc{EstimateWindowSize} will return an $\ell \le \ell^*$ that recovers at least $\frac k {40}$ of the non-zero entries of $v^i$. This follows from the two lemmas below:


\begin{lemma}\label{lem:density}
If \textproc{EstimateWindowSize} does not return $\bot$, it will return an $\ell$ with at least $\frac {m} {400 \ell \log \log m}$ valid sketches  with high probability.
\end{lemma}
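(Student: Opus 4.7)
The plan is a standard concentration argument: the $T = \widetilde O(\log(nm))$ Bernoulli trials inside the \textbf{repeat} loop produce an estimate of the true fraction of valid windows that is accurate enough to distinguish the threshold $\frac{1}{200\log\log m}$ from $\frac{1}{400\log\log m}$. Fix a call to \textproc{EstimateWindowSize} with parameters $a$, current iteration $i$, and a candidate power of two $\ell$. Conditional on $a$ and on the vector $v^i$, whether $sketch_a(v^i, B)$ is valid for a particular window $B$ is deterministic, so letting
\[ p(\ell) = \frac{\#\{j\in\{1,\dots,\lceil m/\ell\rceil\} : sketch_a(v^i, [(j-1)\ell, j\ell-1]) \text{ is valid}\}}{\lceil m/\ell\rceil}, \]
the random variable $\textproc{cnt}$ is a sum of $T$ i.i.d.\ Bernoulli$(p(\ell))$ variables with $\E[\textproc{cnt}] = p(\ell)\,T$.

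Next I would control the bad event that the algorithm returns an $\ell$ whose true fraction is too small. Suppose $p(\ell) < \frac{1}{400\log\log m}$; then $\E[\textproc{cnt}] < \frac{T}{400\log\log m}$, which is exactly half of the acceptance threshold $\frac{T}{200\log\log m}$. By a multiplicative Chernoff bound (equivalently Hoeffding),
\[ \Pr\!\left[\textproc{cnt} \geq \tfrac{T}{200\log\log m}\right] \;\leq\; \exp\!\left(-\Omega\!\left(\tfrac{T}{(\log\log m)^2}\right)\right). \]
Choosing the hidden constants in $T = \widetilde O(\log(nm))$ so that $T \geq C\log(nm)\cdot(\log\log m)^2$ for a sufficiently large $C$, this probability is at most $1/\mathrm{poly}(nm)$.

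Finally I would take a union bound over all opportunities for a bad event: at most $1+\lfloor \log m\rfloor$ candidate values of $\ell$ inside a single call, the $L = \widetilde O(\log(nm))$ parameters $a_j$, and the $n$ outer iterations, giving at most $\mathrm{poly}(nm)$ bad events. On the complementary high-probability event, whenever \textproc{EstimateWindowSize} returns some $\ell$ (i.e.\ does not return $\bot$), we are guaranteed $p(\ell) \geq \frac{1}{400\log\log m}$, which translates into at least $p(\ell)\cdot\lceil m/\ell\rceil \geq \frac{m}{400\,\ell\,\log\log m}$ valid windows, as claimed.

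The only subtle point, and the thing I would be careful about, is separating the two sources of randomness: the $a_j$'s are drawn once at initialization, while the window samples inside \textproc{EstimateWindowSize} are fresh Bernoullis. The concentration bound above is taken conditional on $a$ and on the current state of $v^i$, so it holds regardless of how $a_j$ and the history of prior iterations were drawn; this lets the union bound over $(i, j, \ell)$ go through cleanly. Everything else is routine once the constants in $T$ are chosen to absorb the $(\log\log m)^2$ slack in the Chernoff exponent.
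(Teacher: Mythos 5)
Your proposal is correct and follows essentially the same argument as the paper: condition on $a$ and the current $v^i$, observe that $\textproc{cnt}$ is Binomial with mean $p(\ell)T$, and show via a concentration bound that an $\ell$ with $p(\ell) < \frac{1}{400\log\log m}$ is accepted only with probability $1/\mathrm{poly}(nm)$, then union bound. The only cosmetic difference is that you invoke the additive Hoeffding bound (costing a $(\log\log m)^2$ factor in $T$) where the paper uses the multiplicative Chernoff bound (costing only $\log\log m$); both are absorbed by the $\widetilde{O}(\log(nm))$ choice of $T$.
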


\begin{lemma}\label{lem:density2}
  If the chosen parameter $a$ is helpful for $v^i$, then
\textproc{EstimateWindowSize} will return $\ell \le \ell^*$ with high probability.
\end{lemma}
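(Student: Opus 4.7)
The plan is to show that whenever the \textproc{EstimateWindowSize} loop reaches $\ell=\ell^*$ (if it has not already returned earlier), the acceptance threshold is comfortably satisfied and the procedure returns. Note first that $\ell^*$ is a power of $2$ with $\ell^* \leq m/(10k\log\log m) \leq m$, so $\ell^*$ is genuinely among the values iterated by the outer \textbf{for} loop. Therefore either the procedure returns at some $\ell<\ell^*$ (in which case the conclusion holds trivially), or it reaches the iteration $\ell=\ell^*$. I would then focus on the latter case.

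At $\ell=\ell^*$, the helpfulness hypothesis gives at least $k/4$ valid windows out of $\lceil m/\ell^*\rceil$ total. Consequently the probability that a single uniformly random window is valid is at least
\[
p^{*}\;\geq\;\frac{k/4}{\lceil m/\ell^{*}\rceil}\;\geq\;\frac{k\ell^{*}}{8m}\;\geq\;\frac{1}{160\log\log m},
\]
where the last step uses the lower bound $\ell^{*}\geq m/(20k\log\log m)$ stated in the analysis. Crucially, this lower bound on $p^{*}$ is a constant factor above the algorithmic threshold $1/(200\log\log m)$ used in the \textbf{if}-test.

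The $T$ samples drawn inside the \textbf{repeat} block are i.i.d.\ uniform, so $\textproc{cnt}$ is a sum of $T$ independent Bernoulli indicators, each with success probability exactly equal to the fraction of valid windows, which is $\geq p^{*}$. Hence $\mathbb{E}[\textproc{cnt}]\geq T/(160\log\log m)$, and the target threshold $T/(200\log\log m)$ is a $(1-\delta)$-fraction of this expectation for some absolute constant $\delta>0$. A multiplicative Chernoff bound then gives
\[
\Pr\!\left[\textproc{cnt}<\tfrac{T}{200\log\log m}\right]\;\leq\;\exp\!\left(-\Omega\!\left(\tfrac{T}{\log\log m}\right)\right),
\]
which is $1/\mathrm{poly}(nm)$ as long as one picks $T=\Theta(\log(nm)\cdot\log\log m)$, still absorbed into the $\widetilde{O}(\log(nm))$ budget. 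A union bound over the at most $O(\log m)$ values of $\ell$ tried before $\ell^{*}$ keeps the total failure probability negligible.

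The main (only) obstacle is bookkeeping the constants: the helpfulness definition supplies a slightly weak density $k/4$, the range allowed for $\ell^{*}$ introduces a factor of $2$, and the ceiling $\lceil m/\ell^{*}\rceil$ costs another factor of $2$. I need these slacks to combine so that $p^{*}$ comfortably exceeds $1/(200\log\log m)$ with room for a Chernoff deviation; the calculation above shows they do. Everything else is a routine concentration argument.
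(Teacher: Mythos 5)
Your proposal is correct and follows essentially the same route as the paper's proof: condition on reaching the iteration $\ell=\ell^*$, lower-bound the per-sample success probability by roughly $\frac{1}{\widetilde{O}(\log\log m)}$ using helpfulness and $\ell^*\geq \frac{m}{20k\log\log m}$, and apply a multiplicative Chernoff bound to show the threshold test passes with high probability (the paper phrases this as a contradiction, and your final union bound over earlier values of $\ell$ is harmless but unnecessary, since an early return already satisfies the conclusion).
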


The two lemmas imply that if the chosen parameter $a$ is helpful for vector $v^i$, \textproc{EstimateWindowSize} will return an $\ell \le \ell^*$ with 
at least $\frac {m} {400 \ell \log \log m} \ge \frac {m} {400 \ell^* \log \log m} \ge \frac k {40}$ valid sketches  with high probability.

Since $a$ is helpful with probability at least $\frac {1}{100\log\log m}$, among the $L = \widetilde{O}(\log (nm))$ data structures that we create with different parameters $a$, there will be at least $\log (nm)$ helpful data structures with high probability. This means that with probability $1 - \frac 1 { \text{poly}(nm) }$, all elements $S^i \setminus S^{i-1}$ and $S^i \setminus (S^{i-1} + w_i)$ will be recovered. The correctness follows by taking union bound over all $w_i$'s.

The runtime of the algorithm is $\widetilde{O}(n + m)$. Calls to \textproc{EstimateWindowSize} and \textproc{FindNonZero} only take $\text{poly} \log (nm)$ time. The number of calls to $\textproc{FindNonZero}$ depends on the window size $\ell$ returned by \textproc{EstimateWindowSize} at every iteration. Lemma~\ref{lem:density}, implies that whenever \textproc{EstimateWindowSize} does not return $\bot$, it returns an $\ell$ has at least $\frac {m} {400 \ell \log \log m}$ valid sketches. This means that even though $\frac m \ell$ calls to \textproc{FindNonZero} are made, the number of such calls can be upper-bounded by $400 \log \log m$ times valid sketches. There will be at most $m$ valid sketches throughout the execution of the algorithm which yields the claimed bound on the runtime.

\section{Discussion}
\label{discussion}

\paragraph{On the optimality of our algorithm}

The work of~\cite{ABHS17} shows that the non-Modular Subset Sum problem cannot be solved in time $t^{1-\eps}\cdot 2^{o(n)}$ for any constant $\eps>0$ assuming the Strong Exponential Time Hypothesis (SETH). This lower bound also implies that the {\em Modular} Subset Sum problem cannot be solved in time $m^{1-\eps}\cdot 2^{o(n)}$. Indeed, suppose that there exists an $m^{1-\eps}\cdot 2^{o(n)}$ time algorithm for the Modular Subset Sum problem. Given an instance of the non-modular version of the problem, we set $m=t\cdot n$ and run the algorithm for the modular version. This solves the non-modular problem since we can assume that all given integers $w$ satisfy $w<t$ without loss of generality. We get $(tn)^{1-\eps}\cdot 2^{o(n)}=t^{1-\eps}\cdot 2^{o(n)}$ time algorithm for the non-modular version of the problem, which contradicts SETH (by~\cite{ABHS17}). Thus our algorithm is essentially optimal.

\paragraph{Runtime independent of $n$}
If the elements $w_i$ are succinctly described by listing the multiplicities of all elements in $[m]$, the runtime of the algorithm can be made $\widetilde{O}(m)$, independent of $n$. This is because once a given weight $w$ does not produce any new subset sums, we can ignore any other elements with the same weight again. Thus, the number of elements we will consider is $O(m)$. This is because the number of elements that will produce a new subset sum is at most $m$ and there will be at most $m$ times where a new element does not produce any subset sums.

\paragraph{Extension to higher dimensions}
Our algorithm can be easily extended to solve higher dimensional generalizations of Modular Subset Sum: 

\emph{Given a sequence of vectors $w_1, \dots, w_n\in\mathbb{Z}_m^d$ and a vector $t\in\mathbb{R}_m^d$, does there exist a subset $S$ of $[n]$ such that 
$\sum_{i\in I} w_i \equiv t\ (mod\ m)$?}

Our algorithm can list all attainable subset sums $t\in\mathbb{Z}_m^d$ in time $\widetilde{O}\left(m \right)^d$. Instead of using a data structure to compute interval sums and evaluate sketches, it needs to use a data structure supporting high dimensional range queries.

\paragraph{Applications to (non-Modular) Subset Sum}

The (non-Modular) Subset Sum can be seen as a special case of Modular Subset Sum, by setting $m = s$ where $s$ is the sum of all input numbers. 
Our algorithm implies an algorithm for this problem which matches the $\widetilde{O}(s)$ runtime achieved by Koiliaris and Xu \cite{KX17}, but interestingly does so without using FFT.

An interesting open question is what the best possible runtime for Subset Sum is. Even though the best known algorithm for Subset Sum \cite{Bringmann17} running in $\widetilde{O}(t)$ is optimal in its dependence on the target sum $t$ assuming the Strong Exponential Time Hypothesis, it is possible that an $\widetilde{O}(n + M)$ algorithm exists, where
$M$ is the largest of the $n$ given integers. Such an algorithm would not contradict any of the known conditional lower bounds.

\begin{theorem*}
Is there an $\widetilde{O}(n + M)$ time algorithm for the non-Modular Subset Sum problem, where
$M$ is the largest of the $n$ given integers?
\end{theorem*}

It is known that when elements are bounded by $M$, any instance of Subset Sum can be reduced to an instance where the target is in $[0, M]$ and all elements are possibly negative and lie in the range $[-M,M]$. This follows from~\cite{EW18} using Steinitz Lemma or from~\cite{Pisinger99} by the process of Balanced Fillings. In addition, for some ordering of the elements, any prefix of the optimal subset also lies within the same window $[-M,M]$. This is quite similar to Modular Subset Sum as one only needs to keep track of elements within an interval of size $O(M)$. It is an interesting question whether the ideas from linear sketching can be used to efficiently simulate Bellman's
algorithm in such a setting.

	\bibliographystyle{alpha}
	\bibliography{references}

\appendix 
\section{Missing proofs}

\subsection{Pseudo-random family}
The goal of this section is to prove Lemmas~\ref{certlemma}, \ref{weaksketch}, and \ref{validsketches}. We will first prove the following Lemma, which bounds the probability
that a specific non-zero element falls within the same window with some other non-zero.

\begin{lemma}
\label{helper}
Let $v$ be a vector in $\mathbb{R}^m$ with exactly $k$ non-zero elements. Then $\forall i\in[m]$,
\[ 
\underset{a\sim_U \mathbb{Z}_m^*}{\Pr}\left[\exists\ j\neq i\ : 
v_j \neq 0, 
\left|(ai - aj)\ mod\ m\right|
< \frac{m}{10k\log\log m}
\right] 
\leq \frac{1}{2} 
\]
\end{lemma}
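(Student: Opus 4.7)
The plan is to bound, for each ``bad'' index $j \neq i$ with $v_j \neq 0$, the probability that $a(i-j) \bmod m$ lands in a small window around $0$, and then take a union bound over the at most $k-1$ such $j$. Fix such a $j$ and set $d_j = i - j$ and $W = m / (10 k \log \log m)$; the goal is to estimate
\[ P_j := \Pr_{a \sim_U \mathbb{Z}_m^*}\bigl[\, |a d_j \bmod m| < W \,\bigr]. \]
Since $j \neq i$ we have $d_j \not\equiv 0 \pmod m$, and since $\gcd(a,m)=1$ this forces $a d_j \not\equiv 0 \pmod m$ for every $a \in \mathbb{Z}_m^*$.

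The heart of the argument is a double-counting bound on the number of pairs $(a, x)$ with $a \in \mathbb{Z}_m^*$, $0 < |x| < W$, and $a d_j \equiv x \pmod m$. Writing $g_j = \gcd(d_j, m)$, the congruence $a d_j \equiv x \pmod m$ in $\mathbb{Z}_m$ has a solution iff $g_j \mid x$, in which case it has exactly $g_j$ of them (hence at most $g_j$ lying in $\mathbb{Z}_m^*$). The number of integers $x$ with $0 < |x| < W$ and $g_j \mid x$ is at most $2W/g_j$, so the total pair count is at most $g_j \cdot (2W/g_j) = 2W$, \emph{independent of} $g_j$. Dividing by $\phi(m) = |\mathbb{Z}_m^*|$ gives $P_j \leq 2W/\phi(m)$.

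Summing over the at most $k$ bad indices $j$ and substituting $W$,
\[ \sum_{j} P_j \;\leq\; \frac{2kW}{\phi(m)} \;=\; \frac{1}{5 \log\log m} \cdot \frac{m}{\phi(m)}. \]
The proof concludes by invoking the classical Rosser--Schoenfeld bound $\phi(m) \geq m / (e^\gamma \log\log m + O(1/\log\log m))$, which gives $m/\phi(m) \leq C \log\log m$ for an absolute constant $C < 5/2$ valid for all $m$ above a small threshold (the handful of remaining tiny $m$ either make the event trivial because $W$ covers almost all of $\mathbb{Z}_m$, or can be absorbed by adjusting constants). Plugging in, $\sum_j P_j \leq C/5 \leq 1/2$, as required.

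The main obstacle is that when $\gcd(d_j, m) > 1$, the random variable $a d_j \bmod m$ is \emph{not} uniform on $\mathbb{Z}_m$; it is supported only on the multiples of $g_j$, where it is concentrated with multiplicity $g_j$. Thus the naive estimate $P_j \leq 2W/m$ is invalid, and one must count pairs rather than appeal to uniformity. The double-counting above handles both regimes uniformly but pays a factor of $m/\phi(m) = O(\log\log m)$ in the end, which is exactly the reason the lemma carries a $\log\log m$ in the denominator of the window size $W$ rather than just a constant factor.
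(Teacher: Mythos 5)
Your proof is correct, and its skeleton matches the paper's: a union bound over the at most $k$ bad indices $j$, a bound of order $W/\phi(m)$ on the probability that $a(i-j) \bmod m$ lands within distance $W$ of $0$, and the lower bound $\phi(m) = \Omega(m/\log\log m)$ to finish (which is exactly where the $\log\log m$ in the window size gets spent, as you note). The one place where you genuinely diverge is the treatment of $g = \gcd(i-j,m) > 1$, which is the crux. The paper passes to the quotient modulus $m' = m/g$, implicitly using that the reduction map $\mathbb{Z}_m^* \to \mathbb{Z}_{m'}^*$ is surjective with equal-size fibers, so the problem reduces to the coprime case at modulus $m'$ (and then further to $t=1$ via $t\mathbb{Z}_m^* = \mathbb{Z}_m^*$). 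You instead count pairs $(a,x)$ with $a d_j \equiv x \pmod m$ directly: the multiplicity $g$ of solutions per admissible $x$ exactly cancels the factor $1/g$ in the density of admissible $x$, giving at most $2W$ bad values of $a$ independently of $g$. Your double count is more self-contained --- it avoids the fiber-uniformity fact about unit groups, which the paper uses without comment --- and it makes transparent why the bound does not depend on $g$. Both arguments share the same mild imprecision for very small $m$ in applying the lower bound on $\phi(m)$ (the paper uses $\phi(m) \geq m/(2\log\log m)$, you use $m/\phi(m) \leq C\log\log m$ with $C < 5/2$; either requires $m$ above a modest threshold), which you at least flag explicitly.
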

\begin{proof}
First of all, note that 
\begin{align*}
&\underset{a\sim_U\mathbb{Z}_m^*}{\Pr}\left[\exists\ j\neq i\ : v_j\neq 0, \left|(ai - aj)\ mod\ m\right| < \frac{m}{10k\log\log m}\right] \\
\leq &
k \underset{a\sim_U\mathbb{Z}_m^*}{\Pr}\left[\left|at\ mod\ m\right|< \frac{m}{10k\log\log m} \right] 
\end{align*}
for a fixed $t\in\mathbb{Z}_m \backslash \{0\}$.
Now, if $gcd(t,m) = g > 1$, we can write $t = g t'$ and $m = g m'$, where $gcd(t',m') = 1$. But then we get
\begin{align*}
\underset{a\sim_U\mathbb{Z}_m^*}{\Pr}\left[|at\ mod\ m|< \frac{m}{10k\log\log m} \right] =
&\underset{a\sim_U\mathbb{Z}_m^*}{\Pr}\left[|agt'\ mod\ gm'|< \frac{gm'}{10k\log\log m} \right] \\=
&\underset{a\sim_U\mathbb{Z}_{m'}^*}{\Pr}\left[|agt'\ mod\ gm'|< \frac{gm'}{10k\log\log m} \right] \\=
&\underset{a\sim_U\mathbb{Z}_{m'}^*}{\Pr}\left[|at'\ mod\ m'|< \frac{m'}{10k\log\log m} \right]
\end{align*}
and so we can assume wlog that $gcd(t,m) = 1$.
Furthermore, note that in this case $t\mathbb{Z}_m^* = \mathbb{Z}_m^*$, and so wlog we can assume that $t=1$. 
So it is enough to bound
\begin{align*}
\underset{a\sim_U\mathbb{Z}_m^*}{\Pr}\left[|a| < \frac{m}{10k\log\log m} \right]
\leq
&\frac{2\frac{m}{10k\log\log m}}{|\mathbb{Z}_m^*|}\\
\leq &\frac{2\frac{m}{10k\log\log m}}{\frac{m}{2\log\log m}}\\
\leq&
\frac{1}{2k}
\end{align*}

Finally, we get 
\begin{align*}
&\underset{a\sim_U\mathbb{Z}_m^*}{\Pr}\left[\exists\ j\neq i\ : 
v_j\neq 0, \left|(ai - aj)\ mod\ m\right| < \frac{m}{10k\log\log m}\right] \\
&\leq k \underset{a\sim_U\mathbb{Z}_m^*}{\Pr}\left[|at\ mod\ m|< \frac{m}{10k\log\log m} \right] \\
&\leq k\frac{1}{2k} \\
&= \frac{1}{2}
\end{align*}
\end{proof}

\begin{prevproof}{Lemma}{certlemma}

  Let $S = \{i: v_i \neq 0\}$ be the set of coordinates of non-zero entries of $v$ and let $k=|S|$ be its size. 
  We argue that the set $R = \{i : (ai+b) \in [0,c]\}$ for parameters $a,b,c$ drawn according to the distribution $\mathcal{D}$ of Definition~\ref{def:pseudorandom} will contain a single element from $S$ with non-trivial probability.
  
  $$\Pr_{R \sim \mathcal{D}}[ |S \cap R| = 1 ] = 
  \sum_{i \in S} \Pr[ i \in R ] \Pr[ j \notin R \text{ for all }  j \in S\setminus\{i\} | i \in R ] $$
  
  With probability $\frac{ 1} {O(\log m)}$ the parameter $c$ is selected so that
  $\frac{m}{20k\log\log m} \le c \le \frac{m}{10k\log\log m}$. In that case, we have that
  $\Pr[ i \in R | c ] = \frac c m \ge \frac{1}{20k\log\log m}$.
  
  Moreover, by Lemma~\ref{helper}, we have that 
  $$\Pr[ j \notin R \text{ for all }  j \in S\setminus\{i\} | i \in R, c ] \ge 
  \underset{a\sim_U \mathbb{Z}_m^*}{\Pr}\left[\forall\ j\neq i\ : 
  v_j \neq 0, 
  \left|(ai - aj)\ mod\ m\right|
  > \frac{m}{10k\log\log m}
  \right] 
  > \frac{1}{2}
  $$
  
  Thus overall, 
  $$\Pr_{R \sim \mathcal{D}}[ |S \cap R| = 1 ] \ge  \frac{ 1} {O(\log m)} \cdot \frac{1}{40\log\log m} \ge \frac{ 1} {\widetilde{O}(\log m)}$$
\end{prevproof}

\begin{prevproof}{Lemma}{weaksketch}
  The result follows by the proof of Lemma~\ref{certlemma} which shows that with probability at least $\frac{ 1} {\widetilde{O}(\log m)}$ a random vector $r$ drawn from distribution $\mathcal{D}$ will isolate a single non-zero entry of $v$. In that case the sketch $\frac {\langle v , Id_r \rangle} {\langle v , \vec{1}_r \rangle}$ will give its index correctly. 
\end{prevproof}

\begin{prevproof}{Lemma}{validsketches}

To prove that
$$\E_{a \sim_U \mathbb{Z}_m^*}\Big[ \# \{j\in \{1,\dots, \left\lceil m/\ell\right\rceil\} \ :\ sketch_a(v,[(j-1)\ell, j\ell-1]) \text{ is valid} \} \Big] \geq \frac {k} {2}$$
it suffices to lower bound the probability that a given element is unique in its window of size $\ell$.

\begin{align*}
& \E_{a \sim_U \mathbb{Z}_m^*}\Big[ \# \{j\in \{1,\dots, \left\lceil m/\ell\right\rceil\} \ :\ sketch_a(v,[(j-1)\ell, j\ell-1]) \text{ is valid} \} \Big] \\
& \geq \sum\limits_{i\in\mathbb{Z}_m, v_i\neq 0} 
\underset{a\sim_U \mathbb{Z}_m^*}{\Pr}\left[\nexists\ j\neq i\ : v_j \neq 0, \left|(ai - aj)\ mod\ m\right| < \ell \right] \\
& \geq \sum\limits_{i\in\mathbb{Z}_m, v_i\neq 0} 
\underset{a\sim_U \mathbb{Z}_m^*}{\Pr}\left[\nexists\ j\neq i\ : v_j \neq 0, \left|(ai - aj)\ mod\ m\right| < \frac{m}{10k\log\log m} \right] \\
&\geq \frac {k} {2}\\
\end{align*}
where the last inequality follows by \cref{helper}.
\end{prevproof}

\subsection{Window size estimation}
Given parameter $a$ and a window size $\ell$, let $c_\ell$ be the number of valid sketches computed with parameter $a$ and windows of size $\ell$.
The total number of windows is $t_\ell = \lceil\frac{m}{\ell}\rceil$.

\begin{prevproof}{Lemma}{lem:density}
Suppose that an $\ell$ is returned such that $c_\ell < \frac{m}{400\ell\log\log m}$.
Since
\[ \mathbb{E}[\textproc{cnt}] = \frac{c_{\ell}T}{t_{\ell}} \]
Chernoff bounds yield
\begin{align*} 
\Pr\left[\textproc{cnt} \geq \frac{T}{200\log\log m}\right] 
=
&\Pr\left[\textproc{cnt} \geq \frac{t_\ell}{200c_\ell\log\log m} \mathbb{E}[\textproc{cnt}] \right] \\
\leq 
&e^{-\left(\frac{t_\ell}{200c_\ell\log\log m} - 1\right)\frac{c_\ell T}{t_\ell}/3} \\
\leq &
e^{-\frac{c_\ell T}{t_\ell}/3} \\
=&
e^{-\frac{T}{1200\log\log m}}\\ 
\leq &\frac{1}{(nm)^{10}} 
\end{align*}
where we used the fact that $T = \widetilde{\Theta}(\log(nm))$. Therefore with high probability $\textproc{cnt} < \frac{T}{200\log\log m}$, 
which contradicts
the assumption that such an $\ell$ will be returned.
\end{prevproof}

\begin{prevproof}{Lemma}{lem:density2}
Suppose to the contrary. This means that all $\ell<\ell^*$ were rejected by $\textproc{EstimateWindowSize}$ and at some point
$\ell=\ell^*$ was considered as the window size.
Now, the fact that $a$ is helpful means that $c_{\ell^*} \geq \frac{k}{4}$. Combining this with $\ell^* \geq \frac{m}{20k\log\log m} $ implies that
\[ \frac{c_{\ell^*}}{t_{\ell^*}} \geq \frac{k/4}{\lceil\frac{m}{\ell^*}\rceil} \geq \frac{1}{100\log\log m}\]
and so 
\[ \mathbb{E}[\textproc{cnt}] = \frac{c_{\ell^*}T}{t_{\ell^*}} \geq \frac{T}{100\log\log m}\]
By using Chernoff bounds this implies that
\[ 
 \Pr\left[\textproc{cnt} < \frac{T}{200\log\log m} \right] \leq e^{-\frac{T}{800 \log \log m}} \leq \frac{1}{(nm)^{10}}
 \]
where we used the fact that $T = \widetilde{\Theta}(\log (nm))$. Therefore with high probability $\ell^*$ will be returned, which is a contradiction.
We conclude that a window size $\ell\leq \ell^*$ will be returned.
\end{prevproof}

\section{Data Structure} \label{sec:datastructure}
Our data structure will basically be a Binary Search Tree, augmented with the ability to quickly compute sums of values in 
intervals of keys. 
We will denote the data structure by $DS_a$, where $a\in\mathbb{Z}_m^{*}$ is some parameter.

Now, given a set of index-value pairs $I \subseteq \mathbb{Z}_m\times \mathbb{Z}$, the Binary Search Tree
will contain the set of key-value pairs $\{(ai\ mod\ m, v)\ :\ (i,v)\in I\}$.

We will use this data structure for two different operations: Inserting a key-value pair, and querying for the
sum of values corresponding to an interval of keys. To be able to compute the queries fast, we will use
the standard technique of maintaining for each node of the BST the sum of values in its subtree.

In the following, let $n$ be the number of keys currently in the data structure.
\paragraph{Inserting}
Inserting into this data structure is identical to inserting in a Binary Search Tree, the only difference
being that we need to update the subtree value sums. This can be done very easily in $O(\log n)$ time,
by simply re-computing the sum of every node touched by summing the recursive values for its children and
the value of the node itself. Since any standard BST touches $O(\log n)$ nodes in this process, this is
our runtime bound.

\paragraph{Querying}
Now, we need to be able to query for the sum of values corresponding to keys in some interval $[l,r]$.
Note that here it might be the case that $l > r$, in which case we can break our interval into two intervals
$[l,m-1],[0,r]$, compute the answer for each one separately, and finally add them up. For 
some interval $[l,r]$ with $l\leq r$, we compute the sum in $O(\log n)$ time in a standard way, as it is
known that the answer can be written as the sum of $O(\log n)$ subtree sums.

\end{document}